\newtheorem{theorem}{Theorem}[section]
\newtheorem{lemma}[theorem]{Lemma}
\newtheorem{proposition}[theorem]{Proposition}
\newtheorem{claim}[theorem]{Claim}
\newcommand{\rank}{\mathop{\rm rank}}
\newcommand{\argmin}{\mathop{\rm argmin}}
\newcommand{\innerop}{\mathop{\rm Op^{(2)}}}
\newcommand{\outerop}{\mathop{\rm Op^{(1)}}}
\newcommand{\ALG}{{\rm ALG}}
\newcommand{\OPT}{{\rm OPT}}
\newcommand{\texorpdfstring}[2]{#1}
\newcommand{\T}{{\rm T}}
\newcommand{\F}{{\rm F}}
\renewcommand{\mid}{\,:\,}
\newcommand{\COMM}[2]{{
\begin{CJK}{UTF8}{ipxm}
\ifthenelse{\equal{#1}{YK}}{\color{blue}}{
\ifthenelse{\equal{#1}{HS}}{\color{red}}{
\ifthenelse{\equal{#1}{KK}}{\color{green}}}}
[#1: #2]
\end{CJK}
}}
\def\iddots{\mathinner{\mkern1mu\raise\p@
    \hbox{.}\mkern2mu\raise4\p@\hbox{.}\mkern2mu
    \raise7\p@\vbox{\kern7\p@\hbox{.}}\mkern1mu}}
\title{Optimal Matroid Partitioning Problems}
\date{}
\author[1]{Yasushi Kawase}
\author[2]{Kei Kimura}
\author[3]{Kazuhisa Makino}
\author[4]{Hanna Sumita}
\affil[1]{Tokyo Institute of Technology, Japan. \texttt{kawase.y.ab@m.titech.ac.jp}}
\affil[2]{Toyohashi University of Technology, Japan. \texttt{kimura@cs.tut.ac.jp}}
\affil[3]{Kyoto University, Japan. \texttt{makino@kurims.kyoto-u.ac.jp}}
\affil[4]{National Institute of Technology, Japan.
 % JST, ERATO, Kawarabayashi Large Graph Project, Japan\quad 
 \texttt{sumita@nii.ac.jp}}
\begin{document}

\maketitle
\begin{abstract}
This paper studies optimal matroid partitioning problems for various objective functions.
In the problem, we are given a finite set $E$ and $k$ weighted matroids $(E, \mathcal{I}_i, w_i)$, $i = 1, \dots, k$, 
and our task is to find a minimum partition $(I_1,\dots,I_k)$ of $E$ such that $I_i \in \mathcal{I}_i$ for all $i$.
For each objective function, we give a polynomial-time algorithm or prove NP-hardness.
In particular, for the case when the given weighted matroids are identical and 
the objective function is the sum of the maximum weight in each set (i.e., $\sum_{i=1}^k\max_{e\in I_i}w_i(e)$), we show that the problem is strongly NP-hard but admits a PTAS.
%we prove that the problem is NP-hard even to approximate within a factor of $o(\log k)$.
\end{abstract}

\section{Introduction}
The \emph{matroid partitioning problem} is one of the most fundamental problems in combinatorial optimization. 
In this problem, we are given a finite set $E$ and $k$ matroids $(E, \mathcal{I}_i)$, $i = 1, \dots, k$, and our task is to find a partition $(I_1,\dots,I_k)$ of $E$ such that $I_i \in \mathcal{I}_i$ for all $i$. 
We say that such a partition $(I_1, \ldots, I_k)$ of $E$ is \emph{feasible}. 
The matroid partitioning problem has been eagerly studied in a series of papers investigating structures of matroids. 
See e.g., \cite{Edm65,Edmonds1970,EdF65,KorteVygen2002,Schrijver2003} for details. 
In this paper, we study weighted versions of the matroid partitioning problem. 
Namely, we assume that each matroid $(E, \mathcal{I}_i)$ has a weight function $w_i:E \to \mathbb{R}_+$. 
We consider several possible objective functions of the matroid partitioning problem.

Let $\outerop$ and $\innerop$ denote two mathematical operators taken from $\{\max, \min, \sum\}$. 
For any partition $P=(I_1, \dots , I_k)$ of $E$,
we call $\outerop_{i=1,\dots ,k}\innerop_{e \in I_i} w_i(e)$ the \emph{$(\outerop,\innerop)$-value} of $P$.
For example, $(\sum,\min)$-value of $P$ denotes $\sum_{i=1,\dots ,k} \min_{e\in I_i} w_i(e)$.

We define the \emph{minimum $(\outerop,\innerop)$-value matroid partitioning problem} as the one for finding a feasible partition with minimum $(\outerop,\innerop)$-value.
The maximum problems are defined analogously. 
These matroid partitioning problems are natural to study, and have many applications in various areas such as scheduling and combinatorial optimization. 
%\COMM{KK}{Introduction: Motivate a bit more why matroid partitioning problem is interesting even when all the matroids are identical.}
We note that all the matroids and/or all the weights may be identical in case such as scheduling with identical machines.

The minimum $(\sum,\sum)$-value matroid partitioning problem is reducible to the \emph{weighted matroid intersection problem}, and vice versa~\cite{Edmonds1970}.
Here, the weighted matroid intersection problem is to find a maximum weight subset that is simultaneously independent in two given matroids. 
It is known that this problem is polynomially solvable, and many papers have worked on algorithmic aspects of this problem~\cite{KorteVygen2002,Schrijver2003}. 
Generalizations of the weighted matroid intersection problem have also been studied~\cite{Lawler1971,Lovasz1980,LeeSV2013}.

Special cases of the minimum $(\max,\sum)$-value matroid partitioning problem have been extensively addressed in the scheduling literature under the name of the minimum makespan scheduling.
Since this problem is NP-hard, many papers have proposed polynomial-time approximation algorithms. 
We remark that most papers focused on subclasses of matroids as inputs: 
for example, free matroids~\cite{LST90,VeW14}, partition matroids~\cite{WuY07,WuY08,LiL14}, uniform matroid~\cite{KeW93,BKK98,Dem01}, and general matroids~\cite{WuY08}.
Approximation algorithms for the maximum $(\min,\sum)$-value matroid partitioning problem are also well-studied, see, e.g., \cite{CHL02,HTZ03,WuY07,LiL14}.

The other matroid partitioning problems also have many applications, 
and yet they are not much studied especially for general matroids.
We here describe some examples of applications.
%We note, however, that most existing papers for these objective functions handle only specific subclasses of matroids,
%such as free matroids, partition matroids, and uniform matroids.
\begin{description}
\begin{fullpaper}
\item[Minimum bottleneck spanning tree] % min (max,max)
  Consider the minimum bottleneck spanning tree problem:
  given a connected undirected weighted graph $G=(V,E;w)$,
  we are asked to find a spanning tree whose maximum weight edge is minimized.
  Let $(E,\mathcal{I}_1)$ be a graphic matroid obtained from $G$ and let $(E,\mathcal{I}_2)$ be its dual.
  Then, this problem is formulated as the minimum $(\max,\max)$-value matroid partitioning problem with $(E,\mathcal{I}_1)$, $(E,\mathcal{I}_2)$, $w_1 = w$ and $w_2 = \mathbf{0}$, where the $(\max,\max)$-value is $\max_{i=1,2} \max_{e \in I_i} w_i(e)$. 
  %the maximum of the maximum weight in each set.
\end{fullpaper}
\item[Maximum total capacity spanning tree partition] % max (sum,min)
  Assume that we are given an undirected weighted graph $G=(V,E;w)$,
  which can be partitioned into $k$ edge-disjoint spanning trees.
%  We want to broadcast 
  The maximum total capacity spanning tree partition problem
  is to compute a partition of the edges into $k$ edge-disjoint spanning trees
  such that the total of the minimum weight in each spanning tree is maximized.
  Then, the problem can be written as the maximum $(\sum,\min)$-value matroid partitioning problem having $k$ identical graphic matroids, where the $(\sum,\min)$-value is $\sum_{i=1}^k \min_{e \in I_i} w(e)$. 
  %the sum of the minimum weight in each set.
%\item[Minimum makespan of a scheduling] % min (max,sum)
%  Suppose that we are given $n$ jobs $E$ and $k$ machines, and that each job needs to be scheduled on exactly one machine.
%  Also, we are given processing time $p_{i}(e)$ to process job $e \in E$ on machine $i$.
%  The goal of the problem is to minimize makespan, i.e., the maximum completion time.
%  This problem can be regarded as the minimum $(\max,\sum)$-value matroid partitioning problem with $k$ free matroids, where the $(\max,\sum)$-value is $\max_{i=1, \ldots, k} \sum_{e \in I_i} p_i(e)$.
  %the maximum of the total weights in each set. 
\item[Minimum total memory of a scheduling] % min (sum,max)
  In this problem we are also given $n$ jobs $E$ and $k$ identical machines,
  and each job needs to be scheduled on exactly one machine.
  In addition, we are given size $s(e)$ of job $e\in E$.
  The set of feasible allocation for each machine $i$ is represented by a family of independent sets $\mathcal{I}_i$ of a matroid.
  The goal of the problem is to minimize the total memory needed, i.e., $(\sum,\max)$-value $\sum_{i=1}^k \max_{e \in I_i} s(e)$. 
  %the sum of maximum size of jobs on each machine.
  %This problem can be regarded as the matroid partitioning problem. 
  % to minimize the sum of the maximum weight in each set.
%% \item[Santa Claus problem] % max (min,sum)
\end{description}
%Other applications include network mesh \cite{}, graph rigidity \cite{}, sparse hypergraphs \cite{}.
%\COMM{YK}{消す}
Burkard and Yao~\cite{BuY90} showed that the minimum $(\sum, \max)$-value matroid problem can be solved by a greedy algorithm for a subclass of matroids, which includes partition matroids. 
%The subclass includes partition matroids.
Dell'Olmo et al.~\cite{DHP05} investigated optimal matroid partitioning problems where the input matroids are identical partition matroids.
%Their work is different from ours in the sense that our work deals with general matroids. 
%In particular, they showed that the $(\sum, \max)$ case with identical partition matroids is polynomially solvable, while we show that it is NP-hard in general. 

% %The goal of our paper is to analyze the computational complexity of the minimum and maximum $(\outerop, \innerop)$-value matroid partitioning problem for each $\outerop, \innerop \in \{\max, \min, \sum\}$.
The goal of our paper is to analyze the computational complexity of these matroid partitioning problems for general matroids.
%In the following, we describe our results and then summarize related work.
%We describe our results in the following.

%It is known that the minimum $(\sum, \sum)$-value matroid partitioning problem is polynomially solvable \cite{Edmonds1970,Frank1981}.
%Most existing papers for other objective functions handle only specific subclasses of matroids,
%such as free matroids, partition matroids, and uniform matroids.
%For example, the strong NP-hardness of minimizing $(\sum, \max)$-value is proved for free matroids as minimum makespan scheduling \cite{garey1979cai}.
%On the other hand, we deal with general matroids. 

%% The $(\max, \sum)$ case is (weakly) NP-hard even if $k=2$, $\mathcal{I}_1=\mathcal{I}_2=2^E$, and $w_1=w_2$.
%% This is because the \emph{partition problem}~\cite{garey1979cai} can be reduced to the problem.
%% Moreover, the $(\max, \sum)$ case is strongly NP-hard even if the given matroids and weights are respectively identical
%% because it includes the \emph{3-partition problem}~\cite{garey1979cai} as a special case. %~\cite{GareyJohnson1975}

%In the following, we describe our results and then summarize the related works. 

\subsection*{Our results}
%We analyze computational complexity of all the 18 optimal partitioning problems for general matroids.
%We first show that the maximization problems can be reduced to the minimization problems and vice versa.
We first show that the maximization problems can be reduced to the minimization problems. 
For example, the maximum $(\sum, \min)$-value matroid partitioning problem can be transformed to the minimum $(\sum, \max)$-value matroid partitioning problem. 
Hence, we focus only on the minimization problems.

Our main result is to analyze the computational complexity of the minimum $(\sum, \max)$-value matroid partitioning problem. 
This problem contains the maximum total capacity spanning tree partitioning problem and the minimum total memory scheduling problem. 
We first show that the problem is strongly NP-hard even when the matroids and weights are respectively identical. 
However, for such instances, we also propose a \emph{polynomial-time approximation scheme} (PTAS), i.e., a polynomial-time algorithm that outputs a $(1+\varepsilon)$-approximate solution for each fixed $\varepsilon > 0$. 
Our PTAS computes an approximate solution by two steps: guess the maximum weight in each $I_i^*$ for an optimal solution $(I_1^*,\dots,I_k^*)$, and check the existence of such a feasible partition. 
We remark that the number of possible combinations of maximum weights is $|E|^k$ and it may be too large. 
To reduce the possibility, we use rounding techniques in the design of the PTAS.  
First, we guess the maximum weight in $I_i^*$ for only $s$ indices.
Furthermore, we round the weight of each element and reduce the number of different weights to a small number $r$. 
Then, now we have $r^s$ possibilities. 
To obtain the approximation ratio $(1+\varepsilon)$, we need to set $r$ and $s$ to be $\Omega(\log k)$ respectively, and hence the number of possibilities $r^s$ is still large.
Our idea to tackle this is to enumerate \emph{sequences} of maximum weights in the nonincreasing order. 
This enables us to reduce the number of possibilities to $\binom{r+s-1}{r}~(\le 2^{r+s-1})$.
This implies that our algorithm is a PTAS.

Moreover, for the $(\sum, \max)$ case with general inputs, we provide an $\varepsilon k$-approximation algorithm for any $\varepsilon > 0$. 
The construction is similar to the identical case. 
We also prove the NP-hardness even to approximate the problem within a factor of $o(\log k)$.

For the $(\min,\min)$, $(\max,\max)$, $(\min,\max)$, and $(\min,\sum)$ cases, we provide polynomial-time algorithms. 
The main idea of these algorithms is a reduction to the feasibility problem of the matroid partitioning problem. 
For the $(\max,\min)$ and $(\sum,\min)$ cases, we give polynomial-time algorithms when the matroids and weights are respectively identical, and prove strong NP-hardness even to approximate for the general case.
These results are summarized in Table~\ref{table:results} with their references.

\begin{table}[htb]
\caption{The time complexity of the optimal matroid partitioning problems (the results of the paper are in bold).
Identical case means $\mathcal{I}_1 = \dots =\mathcal{I}_k$ and $w_1 = \dots = w_k$.}\label{table:results}
\centering
\begin{tabular}{clll}
    \hline
    objective     &identical case   &general case                                 &reference\\ \hline
    $(\sum,\sum)$ &P                &P                                            &\cite{Edmonds1970,Frank1981}\\
    $(\max,\sum)$ &SNP-hard         &SNP-hard                                     &\cite{garey1979cai}\\
    $(\sum,\max)$ &\textbf{PTAS}    &\textbf{$\varepsilon k$-approx. }               &Section \ref{sec:msM}\\
                  &\textbf{SNP-hard}&\textbf{NP-hard even for $o(\log k)$-approx.}&\\
    $(\min,\min)$ &\textbf{P}       &\textbf{P}                                   &Section \ref{sec:tractable}\\
    $(\max,\max)$ &\textbf{P}       &\textbf{P}                                   &Section \ref{sec:tractable}\\
    $(\min,\max)$ &\textbf{P}       &\textbf{P}                                   &Section \ref{sec:tractable}\\
    $(\min,\sum)$ &\textbf{P}       &\textbf{P}                                   &Section \ref{sec:tractable}\\
    $(\max,\min)$ &\textbf{P}       &\textbf{NP-hard even to approximate}         &Sections \ref{sec:tractable}, \ref{sec:intractable}\\
    $(\sum,\min)$ &\textbf{P}       &\textbf{NP-hard even to approximate}         &Sections \ref{sec:tractable}, \ref{sec:intractable}\\
    \hline
\end{tabular}
\end{table}

\begin{fullpaper}
\subsection*{The organization of the paper}
In Section \ref{sec:preliminaries}, we define optimal matroid partitioning problems and show basic properties of theses problems.
In particular, we see that the maximization problems are reducible to minimization problems.
In Section \ref{sec:msM}, we prove that the minimum $(\sum, \max)$-value matroid partitioning problem with identical matroids and identical weights is strongly NP-hard but admits a PTAS.
We also show that the problem for general case has no $o(\log k)$-approximation algorithm but has an $\varepsilon k$-approximation algorithm. 
Section~\ref{sec:tractable} deals with tractable cases among other problems, and Section \ref{sec:intractable} gives hardness results.
\end{fullpaper}

\section{Preliminaries}\label{sec:preliminaries}
A \emph{matroid} is a set system $(E,\mathcal{I})$ with the following properties:
(I1) $\emptyset\in\mathcal{I}$,
(I2) $X\subseteq Y\in\mathcal{I}$ implies $X\in\mathcal{I}$, and
(I3) $X,Y\in\mathcal{I}$, $|X|<|Y|$ implies the existence of $e\in Y\setminus X$ such that $X\cup\{e\}\in\mathcal{I}$.
A set $I\subseteq \mathcal{I}$ is said to be \emph{independent}, and
an inclusion-wise maximal independent set is called a \emph{base}.
We denote the set of bases of $(E,\mathcal{I})$ by $B(\mathcal{I})$.
All bases of a matroid have the same cardinality, which is called the \emph{rank} of the matroid and is denoted by $\rank(\mathcal{I})$.
For any $B_1,B_2\in B(\mathcal{I})$ and $e_1\in B_1\setminus B_2$,
there exists $e_2\in B_2\setminus B_1$ such that
$B_1-e_1+e_2\in B(\mathcal{I})$ and $B_2-e_2+e_1\in B(\mathcal{I})$.

For a matroid $(E,\mathcal{I})$, a subset $A\subseteq E$, and a nonnegative integer $l\in\mathbb{Z}_{+}$,
define
\begin{conference}
$\mathcal{I}|A=\{X \mid A\supseteq X\in \mathcal{I}\}$,
$\mathcal{I}\setminus A=\{X\setminus A\mid X\in\mathcal{I}\}$,
$\mathcal{I}/A=\{X\subseteq E\setminus A\mid \rank(X\cup A)-\rank(A)=|X|\}$, and
$\mathcal{I}^{(l)}=\{X\in\mathcal{I}\mid |X|\le l\}$.
\end{conference}
\begin{fullpaper}
\begin{align*}
  \mathcal{I}|A&=\{X \mid A\supseteq X\in \mathcal{I}\},&
  \mathcal{I}\setminus A=\{X\setminus A\mid X\in\mathcal{I}\},\\
  \mathcal{I}/A&=\{X\subseteq E\setminus A\mid \rank(X\cup A)-\rank(A)=|X|\}, &
  \mathcal{I}^{(l)}=\{X\in\mathcal{I}\mid |X|\le l\}.
\end{align*}
\end{fullpaper}
We call $(A,\mathcal{I}|A)$, $(E\setminus A,\mathcal{I}\setminus A)$, $(E\setminus A,\mathcal{I}/A)$, and $(E,\mathcal{I}^{(l)})$, respectively, the \emph{restriction}, \emph{deletion}, \emph{contraction}, and \emph{truncation} of $(E,\mathcal{I})$.
It is well known that $(A,\mathcal{I}|A)$, $(A,\mathcal{I}\setminus A)$, $(E\setminus A,\mathcal{I}/A)$, and $(E,\mathcal{I}^{(l)})$
are all matroids.
Given matroids $\mathcal{M}_1 = (E_1,\mathcal{I}_1)$ and $\mathcal{M}_2 = (E_2,\mathcal{I}_2)$,
we define the \emph{matroid union}, denoted by $\mathcal{M}_1 \vee \mathcal{M}_2$, to be $(E_1\cup E_2,\mathcal{I}_1\vee\mathcal{I}_2)$
where $\mathcal{I}_1\vee\mathcal{I}_2=\{I_1\cup I_2\mid I_1\in\mathcal{I}_1,~I_2\in\mathcal{I}_2\}$. 
Any matroid union is also a matroid.
For more details of matroids, see e.g.,~\cite{oxley1992mt}.

\subsection{Model}
Throughout the paper, we assume that every matroid is given by an independence oracle, which checks whether a given set is independent.
Let $k$ be a positive integer. 
We denote $[k]=\{1,\dots,k\}$. 
Let $(E,\mathcal{I}_i)$ be a matroid and $w_i:E\to\mathbb{R}_+$ be a nonnegative weight function for $i\in [k]$. 
We denote $n=|E|$. 
For any $k$ sets $I_1,\dots,I_k \subseteq E$, we call $(I_1,\dots,I_k)$ a \emph{feasible partition} of $E$ if it satisfies that $\bigcup_{i\in[k]} I_i=E$, $I_i\ne\emptyset~(\forall i\in[k])$\footnotemark, $I_i\cap I_j=\emptyset~(\forall i,j\in[k]$,~$i\ne j)$, and $I_i\in\mathcal{I}_i \ (\forall i \in [k])$.
In particular, $(I_1,\dots,I_k)$ is said to be a \emph{base partition} if it is a feasible partition and $I_i\in B(\mathcal{I}_i)$ for all $i\in[k]$.
\footnotetext{We remark that the condition $I_i\ne\emptyset~(\forall i\in[k])$ is imposed to make the objective function well-defined.
  Moreover, if we define
  $\max_{e\in\emptyset}w_i(e)=0$, %$\max_{e\in\emptyset}w_i(e)\le \min_{e\in I}w_i(e)$,
  $\min_{e\in\emptyset}w_i(e)=\infty$, %$\min_{e\in\emptyset}w_i(e)\ge \max_{e\in I}w_i(e)$,
  and $\sum_{e\in\emptyset}w_i(e)=0$, 
  then we can reduce the problem where empty sets are allowed to our problem by adding dummy elements.
}
%% Without loss of generality, we may assume that $E$ has a feasible partition since the existence of a feasible partition is decidable
%% by checking $E\in\mathcal{I}_1\vee\dots\vee\mathcal{I}_k$ or not.
For two operators $\outerop\in\{\max,\min,\sum\}$ and $\innerop\in\{\max,\min,\sum\}$, 
we define the $(\outerop,\innerop)$-value of a feasible partition $(I_1,\dots,I_k)$ as
\begin{conference}
\(\outerop_{i\in[k]} \innerop_{e\in I_i} w_i(e).\)
\end{conference}
\begin{fullpaper}
\begin{align*}
  %f(g((w_1(e)\mid e\in I_1)),\dots,g((w_k(e)\mid e\in I_k))).
  \outerop_{i\in[k]} \innerop_{e\in I_i} w_i(e).
\end{align*}
\end{fullpaper}
%We consider $(\outerop,\innerop)$-value for the case $\outerop\in\{\max,\min,\sum\}$ and $\innerop\in\{\max,\min,\sum\}$.
In this article, we study the following minimization problem:
\begin{conference}
\begin{align*}
\min\nolimits_{(I_1,\dots,I_k):~\text{feasible partition}}\outerop\nolimits_{i\in[k]} \innerop\nolimits_{e\in I_i} w_i(e).
\end{align*}
\end{conference}
\begin{fullpaper}
\begin{align*}
\min_{(I_1,\dots,I_k):~\text{feasible partition}}\outerop_{i\in[k]} \innerop_{e\in I_i} w_i(e).
\end{align*}
\end{fullpaper}
We refer to the problem as the \emph{minimum $(\outerop,\innerop)$-value matroid partitioning problem}.
We write a problem instance as $(E,(\mathcal{I}_i,w_i)_{i\in[k]})$.
If $(\mathcal{I}_i,w_i)$ are identical for all $i\in[k]$, we write $(E,(\mathcal{I},w),k)$.
For the identical case, we can consider the partitioning problem where $k$ is also a variable. 
This problem can be solved by solving $(E,(\mathcal{I},w),i)$ for $i=1, \ldots, n$. 
Thus it suffices to focus on the problem where $k$ is given. 
%However, we do not handle the problem in this paper since
%we can solve it by solving $(E,(\mathcal{I},w),i)$ for $i=1,\dots,n$.

It is known to be easy to decide whether there exists a feasible partition or not.
Moreover, the minimum $(\sum,\sum)$-value matroid partitioning problem can be solved in polynomial time.
These facts are useful to show our results later.
\begin{theorem}[\cite{Edmonds1970,Frank1981}]\label{thm:mss}
There exists a polynomial-time algorithm 
that decides whether or not there exists a feasible partition for any given matroids $(E,\mathcal{I}_1),\dots,(E,\mathcal{I}_k)$. 
Moreover, if it exists, we can find a feasible partition with minimum $(\sum,\sum)$-value in polynomial time.
\end{theorem}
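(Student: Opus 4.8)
The plan is to reduce both claims to classical, polynomially solvable matroid problems: the feasibility claim to matroid union (equivalently, to the cardinality version of matroid intersection), and the minimization claim to weighted matroid intersection. The key device is to encode the choice ``assign element $e$ to matroid $i$'' as an element $(e,i)$ of an auxiliary ground set $E'=E\times[k]$, and to impose the two kinds of constraints by two matroids on $E'$.

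Concretely, I would define on $E'$ the direct sum matroid $M_1$ in which a set $S\subseteq E'$ is independent iff $\{\,e : (e,i)\in S\,\}\in\mathcal{I}_i$ for every $i\in[k]$, and the partition matroid $M_2$ in which $S$ is independent iff $|S\cap(\{e\}\times[k])|\le 1$ for every $e\in E$. Both independence oracles are immediate from the given oracles for the $\mathcal{I}_i$ with only polynomially many calls. A common independent set $S$ of $M_1$ and $M_2$ corresponds to an assignment in which each element is used at most once and each fiber is independent in its matroid; since $M_2$ caps $|S|$ by $n$, such a set has size $n$ exactly when every $e\in E$ is assigned, in which case $I_i=\{\,e:(e,i)\in S\,\}$ is a partition of $E$ with $I_i\in\mathcal{I}_i$. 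Thus a feasible partition exists iff the maximum cardinality of a common independent set equals $n$, which matroid intersection (or the matroid union rank formula) computes in polynomial time.

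For the minimization part I would run weighted matroid intersection on $(M_1,M_2)$, where the weight of $(e,i)$ encodes $w_i(e)$. Because every feasible partition corresponds to a common independent set of the same size $n$, minimizing $\sum_{i}\sum_{e\in I_i}w_i(e)=\sum_{(e,i)\in S}w_i(e)$ over feasible partitions is the same as finding a minimum-weight common independent set of size $n$; setting the weight of $(e,i)$ to $C-w_i(e)$ for a sufficiently large constant $C$ turns this into a maximum-weight common independent set of size $n$, which the polynomial-time weighted matroid intersection algorithm (Edmonds, Lawler, Frank) delivers with oracle access.

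Two points need care rather than ingenuity. First, one must optimize over common independent sets of cardinality exactly $n$ (equivalently, over bases of $M_2$), not over all common independent sets; this is handled by the size-adjusting weight shift above, or directly, since the weighted intersection algorithm produces a maximum-weight common independent set of each cardinality. Second, the definition of feasible partition forbids empty parts, whereas the reduction a priori allows $I_i=\emptyset$; I would enforce nonemptiness by the dummy-element construction of the footnote, adjoining to each matroid $i$ a coloop $d_i$ of weight $0$ that is a loop in all other matroids, so that $d_i$ is forced into part $i$ without affecting the $(\sum,\sum)$-value. Neither point poses a genuine obstacle, so the content of the proof is essentially the correctness of the two-matroid encoding together with the polynomial solvability of (weighted) matroid intersection.
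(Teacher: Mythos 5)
The paper does not actually prove this theorem --- it quotes it from Edmonds and Frank --- and your reduction (ground set $E\times[k]$, the direct-sum matroid $M_1$ versus the fiber partition matroid $M_2$, feasibility as a size-$n$ common independent set, and the $(\sum,\sum)$ objective via weighted matroid intersection with a size-forcing weight shift) is precisely the classical argument behind those citations, so your proposal is correct and is essentially the intended proof. The one quibble is your nonemptiness fix: making $d_i$ a coloop of $\mathcal{I}_i$ and a loop in every other matroid forces $d_i\in I_i$, so the \emph{augmented} part is nonempty, but $I_i\setminus\{d_i\}$ --- the part of the original ground set $E$ --- may still be empty; this is harmless for every use the paper makes of the theorem (under the footnote's conventions an empty part contributes $0$ to the $(\sum,\sum)$-value), but it does not literally enforce $I_i\cap E\ne\emptyset$.
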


\subsection{Basic properties}\label{sec:basic}
In this subsection, we provide basic properties of the partitioning problems.
These properties imply that the minimization and maximization versions of matroid partitioning problems can be reduced to each other.

% base partitioning
We first observe that we only need to consider base partitioning problems.
Let $\mathcal{M}_i = (E,\mathcal{I}_i)$ be a matroid for $i\in [k]$. 
We add dummy elements so that any feasible partition is a base partition. 
To describe this precisely, we denote $r=\sum_{i\in[k]}\rank(\mathcal{I}_i)-|E|$. 
We remark that $r\ge 0$ if $E$ has a feasible partition, since \(|E|=\sum_{i\in[k]}|I_i|\le\sum_{i\in[k]}\rank( \mathcal{I}_i) \) holds for any feasible partition $(I_1,\dots,I_k)$.
Then let $D=\{d_1,\dots,d_r\}$ be a set of dummy elements. 
Note that $E\cap D=\emptyset$. 
We define two matroids $\mathcal{M}'_i=(D,\mathcal{I}'_i)$ and $\overline{\mathcal{M}}_i=(E\cup D,\overline{\mathcal{I}}_i)$ for each $i\in[k]$ by
\begin{conference}
$\mathcal{I}_i'=\{D'\subseteq D\mid |D'|\le \rank(\mathcal{I}_i)-1\}$ and
$\overline{\mathcal{I}}_i=\{I\cup D'\mid I\in\mathcal{I}_i,~D'\in\mathcal{I}_i',~|I\cup D'|\le \rank(\mathcal{I}_i)\}$. 
\end{conference}
\begin{fullpaper}
\begin{align*}
\mathcal{I}_i'&=\{D'\subseteq D\mid |D'|\le \rank(\mathcal{I}_i)-1\},\\% \quad\text{and}\quad
\overline{\mathcal{I}}_i&=\{I\cup D'\mid I\in\mathcal{I}_i,~D'\in\mathcal{I}_i',~|I\cup D'|\le \rank(\mathcal{I}_i)\}. 
\end{align*}
\end{fullpaper}
Namely, $\mathcal{M}_i'$ is a uniform matroid of rank $(\rank(\mathcal{I}_i)-1)$,
and $\overline{\mathcal{M}}_i$ is the $\rank(\mathcal{I}_i)$-truncation of the matroid union $\mathcal{M}_i\vee \mathcal{M}'_i$.
%% Let $g':2^{E\cup D}\to\mathbb{R}$ satisfy
%% \begin{align*}
%% g'(I\cup I')=g(I)\quad(\forall I\subseteq E,~\forall I'\subseteq D,~|I|\ge 1).
%% \end{align*}
Then, we have the following proposition.
\begin{conference}
The proof can be found in Appendix.
\end{conference}
\begin{proposition}\label{prop:basepartitioning}
For any $(E,(\mathcal{I}_i,w_i)_{i\in[k]})$, its minimum $(\outerop,\innerop)$-value is the same as the minimum $(\outerop,\innerop)$-value for $(E\cup D,(\overline{\mathcal{I}}_i,\overline{w}_i)_{i\in[k]})$, where 
\begin{align*}
  \overline{w}_i(e)=
  \begin{cases}
    w_i(e)             &(e\in E),\\
    \min_{e\in E}w_i(e)&(e\in D,~\innerop=\max),\\
    \max_{e\in E}w_i(e)&(e\in D,~\innerop=\min),\\
    0                  &(e\in D,~\innerop=\sum).
  \end{cases}
\end{align*}
\end{proposition}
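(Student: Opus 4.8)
The plan is to prove equality of the two minimum values by exhibiting, in each direction, a \emph{value-preserving} passage between feasible partitions of $E$ and feasible partitions of $E\cup D$. Since the dummy weights $\overline{w}_i$ are chosen precisely so that inserting or deleting dummy elements never changes $\innerop_{e}w_i(e)$ on any nonempty part, corresponding partitions will have the same $(\outerop,\innerop)$-value, and the two minima must coincide.

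First I would record two structural facts. (i) $\rank(\overline{\mathcal{I}}_i)=\rank(\mathcal{I}_i)$: indeed $\overline{\mathcal{M}}_i$ is the $\rank(\mathcal{I}_i)$-truncation of $\mathcal{M}_i\vee\mathcal{M}_i'$, and $\rank(\mathcal{M}_i\vee\mathcal{M}_i')\ge\rank(\mathcal{M}_i)=\rank(\mathcal{I}_i)$, so the truncation has rank exactly $\rank(\mathcal{I}_i)$. (ii) Every feasible partition $(\overline{I}_1,\dots,\overline{I}_k)$ of $E\cup D$ is automatically a \emph{base} partition: from $\sum_i|\overline{I}_i|=|E\cup D|=|E|+r=\sum_i\rank(\mathcal{I}_i)$ together with $|\overline{I}_i|\le\rank(\overline{\mathcal{I}}_i)=\rank(\mathcal{I}_i)$, every one of these inequalities must be tight, so $|\overline{I}_i|=\rank(\mathcal{I}_i)$ for all $i$.

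For the direction from $E$ to $E\cup D$, I would take a feasible partition $(I_1,\dots,I_k)$ of $E$. The total spare capacity is $\sum_i(\rank(\mathcal{I}_i)-|I_i|)=\sum_i\rank(\mathcal{I}_i)-|E|=r=|D|$, so $D$ can be split into disjoint sets $D_i'$ with $|D_i'|=\rank(\mathcal{I}_i)-|I_i|$, and I set $\overline{I}_i=I_i\cup D_i'$. Because $I_i\ne\emptyset$ we get $|D_i'|\le\rank(\mathcal{I}_i)-1$, hence $D_i'\in\mathcal{I}_i'$ and $\overline{I}_i\in\overline{\mathcal{I}}_i$; thus $(\overline{I}_1,\dots,\overline{I}_k)$ is a feasible partition of $E\cup D$. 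Conversely, from $E\cup D$ to $E$, I would take a feasible partition of $E\cup D$, which is a base partition by (ii), and set $I_i=\overline{I}_i\cap E$, $D_i'=\overline{I}_i\cap D$. Then $I_i\in\mathcal{I}_i$, the $I_i$ partition $E$, and $|I_i|=\rank(\mathcal{I}_i)-|D_i'|\ge\rank(\mathcal{I}_i)-(\rank(\mathcal{I}_i)-1)=1$, so each part is nonempty and $(I_1,\dots,I_k)$ is feasible.

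It remains to verify that in both passages the objective is unchanged, which reduces to $\innerop_{e\in\overline{I}_i}\overline{w}_i(e)=\innerop_{e\in I_i}w_i(e)$ for every $i$ (using $I_i\ne\emptyset$). I would split into three cases: for $\innerop=\max$ the dummy weight $\min_{e\in E}w_i(e)\le\max_{e\in I_i}w_i(e)$, for $\innerop=\min$ the dummy weight $\max_{e\in E}w_i(e)\ge\min_{e\in I_i}w_i(e)$, and for $\innerop=\sum$ the dummy weight $0$ contributes nothing; in each case the dummies are invisible to $\innerop$, giving the claimed equality, whence the $\outerop$-value over $i$ is also preserved. The main obstacle — really the only nonroutine point — is fact (ii): one must invoke the truncation rank bound $\rank(\overline{\mathcal{I}}_i)=\rank(\mathcal{I}_i)$ to force every feasible partition of $E\cup D$ to saturate all matroids, since this is exactly what guarantees $|D_i'|\le\rank(\mathcal{I}_i)-1$ and hence that the parts stay nonempty after the dummies are deleted. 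Everything else is the per-case weight bookkeeping above.
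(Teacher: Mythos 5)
Your proof is correct and follows essentially the same route as the paper's: pad each part with dummy elements of the appropriate cardinality in one direction, strip the dummies in the other, and observe that the choice of $\overline{w}_i$ makes the dummies invisible to $\innerop$ on any nonempty part. You are in fact slightly more explicit than the paper on the one delicate point --- that deleting the dummies cannot empty a part --- which you settle via the rank/cardinality count showing every feasible partition of $E\cup D$ saturates all the matroids.
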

\begin{fullpaper}
\begin{proof}
We observe that by the definition of $\overline{w}_i$, we have $\innerop_{e\in I_i} w_i(e)=\innerop_{e\in I_i\cup D_i}\overline{w}_i(e)$ for any $i\in[k]$, $I_i\subseteq E$ and $D_i\subseteq D$ such that $|I_i|\ge 1$.
Suppose that $(I_1,\dots,I_k)$ attains the minimum $(\outerop,\innerop)$-value for $(E,(\mathcal{I}_i,w_i)_{i\in[k]})$ and 
$(\overline{I}_1,\dots,\overline{I}_k)$ attains the minimum $(\outerop,\innerop)$-value for $(E\cup D,(\overline{\mathcal{I}}_i,\overline{w}_i)_{i\in[k]})$. 
Let $I_i'=I_i\cup\{d_k \mid \sum_{j=1}^{i-1}(r_j-|I_j|)<k\le\sum_{j=1}^i(r_j-|I_j|)\}$ 
and $\overline{I}_i'=\overline{I}_i\setminus D$.
Then, since $(\overline{I}_1',\dots,\overline{I}_k')$ is a feasible partition of $E$ with respect to $(E,(\mathcal{I}_i,w_i)_{i\in[k]})$,
we have
\begin{align*}
  \outerop_{i\in[k]} \innerop_{e\in \overline{I}_i} \overline{w}_i(e)
  &=\outerop_{i\in[k]} \innerop_{e\in \overline{I}_i'} w_i(e)
  \ge \outerop_{i\in[k]} \innerop_{e\in I_i} w_i(e).
\end{align*}
On the other hand, 
since $(I_1',\dots,I_k')$ is a feasible partition of $E\cup D$ with respect to $(E\cup D,(\overline{\mathcal{I}}_i,\overline{w}_i)_{i\in[k]})$,
we have
\begin{align*}
  \outerop_{i\in[k]} \innerop_{e\in I_i} w_i(e)
  &=\outerop_{i\in[k]} \innerop_{e\in I_i'} \overline{w}_i(e)
  \ge \outerop_{i\in[k]} \innerop_{e\in \overline{I}_i} \overline{w}_i(e).
\end{align*}
Thus, we obtain $\outerop_{i\in[k]} \innerop_{e\in I_i} w_i(e) = \outerop_{i\in[k]} \innerop_{e\in \overline{I}_i} \overline{w}_i(e)$
and the proposition holds.
\end{proof}
\end{fullpaper}

We remark that the same property holds for the maximization problem.
\begin{conference}
For more details, see Proposition~\ref{prop:maxtomin} in Appendix.
\end{conference}
\begin{fullpaper}
In the following, we assume $|E|=\sum_{i\in[k]}\rank(\mathcal{I}_i)$.
We next show that the maximization problems are reducible to the minimization ones.
\begin{proposition}
  For any feasible partition $(I_1,\dots,I_k)$ for $(E,\mathcal{I}_i)_{i\in [k]}$, it is 
  an optimal solution for the minimum $(\outerop,\innerop)$-value matroid partitioning problem instance $(E,(\mathcal{I}_i,w_i)_{i\in [k]})$
  if and only if
  it is optimal for the maximum $(\widetilde{\outerop},\widetilde{\innerop})$-value matroid partitioning problem instance $(E,(\mathcal{I}_i,w_i')_{i\in [k]})$,
  where $w^{\max}=\max_{i\in[k]}\max_{e\in E}w_i(e)$, 
  \begin{align*}
    %\tilde{h}=\begin{cases}\min&(h=\max),\\\max&(h=\min),\\\sum&(h=\sum)\end{cases}
%    \begin{cases}
%      \widetilde{\min}=\max,\\
%      \widetilde{\max}=\min,\\
%      \widetilde{\sum}=\sum
%    \end{cases}
%     \text{and}\quad
&\widetilde{\min}=\max, \ \widetilde{\max}=\min, \ \widetilde{\sum}=\sum, \quad \text{and}\\
    &w_i'(e)=\begin{cases}\frac{|E|}{\rank(\mathcal{I}_i)}\cdot w^{\max}-w_i(e)&(\outerop \in\{\min,\max\},\innerop=\sum),\\w^{\max}-w_i(e)&(\text{otherwise}).\end{cases}
  \end{align*}
\end{proposition}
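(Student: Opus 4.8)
The plan is to prove the stated equivalence by showing that for every feasible partition, the minimum objective value and the maximum objective value of the transformed instance sum to a constant that does not depend on which partition is chosen. Once I establish such a constant, minimizing the original objective is literally the same as maximizing the transformed one, since they differ by an affine transformation with a positive multiplier on the inner value. Because we are in the reduced setting where $|E| = \sum_{i\in[k]}\rank(\mathcal{I}_i)$ (so every feasible partition is a base partition and hence $|I_i| = \rank(\mathcal{I}_i)$ for all $i$), the cardinalities $|I_i|$ are fixed across all feasible partitions, which is exactly what I will need to make the constant partition-independent.

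The key computation is to handle each of the two cases in the definition of $w_i'$ separately, organized by the inner operator $\innerop$. For the case $\innerop = \sum$ (with $\outerop \in \{\min, \max\}$), I would compute, for each index $i$,
\begin{align*}
\sum_{e\in I_i} w_i'(e) = \sum_{e\in I_i}\Bigl(\tfrac{|E|}{\rank(\mathcal{I}_i)}\,w^{\max} - w_i(e)\Bigr) = |I_i|\cdot\tfrac{|E|}{\rank(\mathcal{I}_i)}\,w^{\max} - \sum_{e\in I_i} w_i(e).
\end{align*}
Since $|I_i| = \rank(\mathcal{I}_i)$, the first term equals $|E|\cdot w^{\max}$, a constant independent of $i$ and of the partition. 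For the case $\innerop = \max$, I use $w_i'(e) = w^{\max} - w_i(e)$ so that $\min_{e\in I_i} w_i'(e) = w^{\max} - \max_{e\in I_i} w_i(e)$; similarly for $\innerop = \min$ the max of the complement weights equals $w^{\max}$ minus the min of the original weights. In each subcase the inner value of the transformed instance is an affine function (with slope $-1$ or, for the sum case, with an additive constant subtracted from it) of the original inner value.

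Next I would push the affine relation through the outer operator, using the elementary identities $\max_i(c - a_i) = c - \min_i a_i$ and $\min_i(c - a_i) = c - \max_i a_i$ together with $\sum_i(c_i - a_i) = (\sum_i c_i) - \sum_i a_i$. This is precisely where the dual pairing $\widetilde{\min}=\max$, $\widetilde{\max}=\min$, $\widetilde{\sum}=\sum$ comes from: flipping the sign inside the operator swaps max and min while fixing sum. Combining the inner and outer transformations, I would conclude that for every feasible partition $(I_1,\dots,I_k)$,
\begin{align*}
\widetilde{\outerop}_{i\in[k]}\widetilde{\innerop}_{e\in I_i} w_i'(e) = C - \outerop_{i\in[k]}\innerop_{e\in I_i} w_i(e),
\end{align*}
where $C$ is a constant (either $|E|\cdot w^{\max}$ in the sum case, or $w^{\max}$ otherwise) that is the same for all feasible partitions. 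The equivalence of optimality then follows immediately: maximizing the left side is the same as minimizing $\outerop_{i}\innerop_{e} w_i(e)$.

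The main obstacle, and the only subtlety, is verifying that $C$ is genuinely partition-independent, which hinges on the reduction $|E| = \sum_i \rank(\mathcal{I}_i)$ forcing $|I_i| = \rank(\mathcal{I}_i)$ in the $\innerop = \sum$ case — without fixed cardinalities the first term above would vary with the partition and the argument would collapse. I would therefore state this cardinality fact explicitly (it holds because every feasible partition is a base partition under the assumption, so $|I_i| = |B| = \rank(\mathcal{I}_i)$ for any base $B \in B(\mathcal{I}_i)$) before doing the computation. A secondary, purely bookkeeping point is confirming nonnegativity of $w_i'$ so that the transformed instance is a legitimate instance of our problem; this follows since $w^{\max} \ge w_i(e) \ge 0$ for all $e$, and in the sum case $\frac{|E|}{\rank(\mathcal{I}_i)} \ge 1$ guarantees $w_i'(e) \ge w^{\max} - w_i(e) \ge 0$. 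Everything else is the routine case analysis sketched above.
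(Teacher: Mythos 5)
Your proof takes essentially the same route as the paper's: both show that the original $(\outerop,\innerop)$-value and the transformed $(\widetilde{\outerop},\widetilde{\innerop})$-value of any feasible partition sum to a partition-independent constant (using $|I_i|=\rank(\mathcal{I}_i)$ in the $\innerop=\sum$ case), so optimality transfers immediately. The only nit is that your stated constant ``$w^{\max}$ otherwise'' should be $k\cdot w^{\max}$ when $\outerop=\sum$ and $\innerop\in\{\max,\min\}$ --- which your own identity $\sum_i(c_i-a_i)=(\sum_i c_i)-\sum_i a_i$ already yields correctly.
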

\begin{proof}
By a simple calculation, we have
\begin{align*}
  \outerop_{i\in[k]} \innerop_{e\in I_i} w_i(e)+\mathop{\widetilde{\outerop}}_{i\in[k]}\, \mathop{\widetilde{\innerop}}_{e\in I_i} w_i'(e)&=
  \begin{cases}
    w^{\max}&(\outerop,\innerop \in\{\min,\max\}),\\
    k\cdot w^{\max}&(\outerop=\sum,~\innerop\in\{\min,\max\}),\\
    |E|\cdot w^{\max}&(\innerop=\sum)
  \end{cases}
\end{align*}
for any feasible partition $(I_1,\dots,I_k)$.
Here, the right hand side is a constant, and hence the proposition holds.
%% \begin{align*}
%% \min_{i\in[k]}\min_{e\in I_i} w_i(e)+\max_{i\in[k]}\max_{e\in I_i} \overline{w}_i(e)&=w^{\max},\\
%% \min_{i\in[k]}\max_{e\in I_i} w_i(e)+\max_{i\in[k]}\min_{e\in I_i} \overline{w}_i(e)&=w^{\max},\\
%% \sum_{i\in[k]}\min_{e\in I_i} w_i(e)+\sum_{i\in[k]}\max_{e\in I_i} \overline{w}_i(e)&=k\cdot w^{\max},\\
%% \sum_{i\in[k]}\sum_{e\in I_i} w_i(e)+\sum_{i\in[k]}\sum_{e\in I_i} \overline{w}_i(e)&=|E|\cdot w^{\max},\\
%% \min_{i\in[k]}\sum_{e\in I_i} w_i(e)+\max_{i\in[k]}\sum_{e\in I_i} \hat{w}_i(e)     &=|E|\cdot w^{\max}.\\
%% \end{align*}
\end{proof}
\end{fullpaper}
We note that these reductions above are not approximation factor preserving.
Hence, 
%polynomial solvability and NP-hardness results for the minimization problems implies the same results 
%for the maximization problems, however, 
the (in)approximability of the maximization problems are not deduced from that of the minimization problems.

%% %\subsection{without the number of partition} % k fixで解ければk unfixで解けるのは当たり前なので削除
%% Finally, we consider the matroid partitioning problems on identical matroids and weights without the number partitions, i.e.,
%% \begin{align*}
%% \min_{k,(I_1,\dots,I_k):~\text{feasible partition}}f(g((w(e)\mid e\in I_1)),\dots,g((w(e)\mid e\in I_k))).
%% \end{align*}

%% \begin{proposition}
%%   The problem without the number partitions can be solved in polynomial time
%%   if we have a polynomial time algorithm for
%%   the minimum $(\outerop,\innerop)$-value matroid partitioning problem.
%% \end{proposition}

\section{The minimum \texorpdfstring{$(\sum,\max)$}{(sum,max)}-value matroid partitioning problem}\label{sec:msM}%Hanna
In this section, we study the minimum $(\sum,\max)$-value matroid partitioning problem.
We first deal with the case where the matroids and weights are respectively identical and then go to the general case.

\subsection{Strong NP-hardness of the identical case}
We first prove that the minimum $(\sum,\max)$-value matroid partitioning problem is strongly NP-hard even if the matroids and weights are respectively identical.

To prove this, we use the \emph{densest $l$-subgraph} problem, which is known to be strongly NP-hard~\cite{FeigePK2001}.
The densest $l$-subgraph problem is, given a graph $G$ and an integer $l$, to find a subgraph of $G$ induced on $l$ vertices that contains the largest number of edges.

\begin{fullpaper}
In our reduction, we use the following property on a partition matroid. 
%We first observe a special case where $(E, \mathcal{I})$ is a partition matroid.
Let $(E, \mathcal{I})$ be a partition matroid defined by $\mathcal{I} = \{ I \mid |I \cap S_i| \leq \eta_i \ (i\in [p]) \}$, where $(S_1,\dots,S_p)$ is a partition of $E$, and $\eta_1,\dots,\eta_p$ are positive integers. 
In addition, we assume that $|S_i|=\eta_i\cdot k$ for each $i\in [p]$ so that $E$ can be partitioned into $k$ bases of $\mathcal{I}$. 
Then, for any weight $w$, we can construct greedily an optimal partition to the instance $(E, (\mathcal{I}, w), k)$ of the minimum $(\sum, \max)$-value matroid partitioning problem. 
%an optimal solution to $(E, \mathcal{I}, k)$ is constructed greedily. 
\begin{lemma}[\cite{BuY90}]\label{lem:partition}
Let $(E, \mathcal{I})$ be any partition matroid with $|S_i|=\eta_i\cdot k \ (\forall i \in [p])$, and let $w$ be any weight. 
Let $I_{i,j}$ consist of $\eta_i$ elements with the $\eta_i$ largest weights in $S_i\setminus(\bigcup_{h=1}^{j-1}I_{i,h})$. 
Then $(\bigcup_{i\in [p]} I_{i,1},\dots,\bigcup_{i\in [p]} I_{i,k})$ is an optimal solution to $(E, (\mathcal{I}, w), k)$. 
\end{lemma}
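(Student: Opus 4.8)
The plan is to prove optimality of the greedy partition by a \emph{layer-cake} (threshold) argument: I will produce a lower bound on the $(\sum,\max)$-value of \emph{every} feasible partition that depends only on the weight distribution inside each block $S_i$, and then check that the greedy partition meets this bound exactly. As a preliminary structural observation I would note that, since $|E|=\sum_{i\in[p]}\eta_i\cdot k$ and every member of $\mathcal{I}$ meets $S_i$ in at most $\eta_i$ elements, any feasible partition $(I_1,\dots,I_k)$ forces $|I_j\cap S_i|=\eta_i$ for all $i\in[p]$, $j\in[k]$ (sum the inequalities $|I_j|\le\sum_i\eta_i$ over $j$ and compare with $|E|$). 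In particular each part is a nonempty base, so every $\max_{e\in I_j}w(e)$ is well defined.

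Next I would rewrite the objective via the layer-cake identity. For a nonnegative weight, $\max_{e\in I_j}w(e)=\int_0^\infty \mathbf{1}[\max_{e\in I_j}w(e)>t]\,dt$, and summing over $j$ gives
\begin{align*}
\sum_{j\in[k]}\max_{e\in I_j}w(e)=\int_0^\infty \bigl|\{\,j\in[k]\mid \max_{e\in I_j}w(e)>t\,\}\bigr|\,dt.
\end{align*}
Writing $n_i(t)=|\{e\in S_i\mid w(e)>t\}|$, the key lower bound is that for every threshold $t$ at least $\lceil n_i(t)/\eta_i\rceil$ parts contain an element of $S_i$ of weight exceeding $t$, since each part holds at most $\eta_i$ elements of $S_i$; hence
\begin{align*}
\bigl|\{\,j\in[k]\mid \max_{e\in I_j}w(e)>t\,\}\bigr|\ \ge\ \max_{i\in[p]}\Bigl\lceil \tfrac{n_i(t)}{\eta_i}\Bigr\rceil .
\end{align*}
Integrating over $t$ yields a lower bound valid for every feasible partition.

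Finally I would verify that the greedy partition attains this bound pointwise in $t$. Let $b_i(m)$ denote the $m$-th largest weight in $S_i$; then $\max_{e\in I_{i,j}}w=b_i((j-1)\eta_i+1)$, so part $j$ of the greedy partition has maximum exceeding $t$ iff $(j-1)\eta_i+1\le n_i(t)$ for some $i$, i.e.\ iff $j\le\max_{i\in[p]}\lceil n_i(t)/\eta_i\rceil$ (using $\lceil n/\eta\rceil=1+\lfloor(n-1)/\eta\rfloor$ for positive integers). Thus the number of greedy parts with maximum above $t$ equals $\max_i\lceil n_i(t)/\eta_i\rceil$ for every $t$, so the greedy objective equals the integrated lower bound and the partition is optimal.

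The main obstacle is the last step: the lower bound couples all blocks through a single $\max_{i}$, so it is not a priori obvious that one partition can be simultaneously tight at \emph{every} threshold $t$. What makes it work is that the greedy rule sorts each $S_i$ independently yet assigns the $j$-th block of every $S_i$ to the same part $I_j$, so the heavy elements of all blocks pile up into the low-index parts in exactly the aligned fashion the ceiling bound demands; carefully translating this alignment into the $b_i(\cdot)$ description is the crux of the argument.
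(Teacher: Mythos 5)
Your proof is correct, and it takes a genuinely different route from the paper's. The paper argues directly against an optimal partition $(I_1^*,\dots,I_k^*)$: after sorting so that $\max_{e\in I_1^*}w(e)\ge\dots\ge\max_{e\in I_k^*}w(e)$, it shows term by term that $\max_{e\in I_j^*}w(e)\ge\max_{e\in\bigcup_i I_{i,j}}w(e)$, via a pigeonhole on the set $S=\bigcup_{h<j}I_{i',h}\cup\{e_j\}$ of the $(j-1)\eta_{i'}+1$ heaviest elements of the critical block $S_{i'}$ (at most $(j-1)\eta_{i'}$ of them fit into the first $j-1$ parts, so one lands in a part of index $\ge j$, contradicting the sort order if $I_j^*$'s maximum were too small). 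You instead integrate over thresholds: the lower bound $|\{j\mid\max_{e\in I_j}w(e)>t\}|\ge\max_i\lceil n_i(t)/\eta_i\rceil$ holds for every feasible partition, and the greedy meets it with equality pointwise because $\max_{e\in I_{i,j}}w(e)=b_i((j-1)\eta_i+1)$. The two arguments rest on the same single-block pigeonhole (your ceiling bound at threshold $t$ is the "dual" of the paper's count of heavy elements above rank $(j-1)\eta_{i'}+1$), but your packaging avoids both the appeal to an optimal solution and the proof by contradiction, and it yields as a byproduct the closed-form optimal value $\int_0^\infty\max_i\lceil n_i(t)/\eta_i\rceil\,dt=\sum_{j=1}^k\max_i b_i((j-1)\eta_i+1)$, which the paper's proof only gives implicitly. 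The paper's version is more elementary (no integral representation needed) and fits the exchange-argument style used elsewhere in the text. One small point worth making explicit in your write-up: the concern you raise at the end about simultaneous tightness across all $t$ dissolves once you observe that the greedy's count function is exactly $t\mapsto\max_i\lceil n_i(t)/\eta_i\rceil$, since a pointwise-tight integrand needs no further alignment argument.
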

\begin{proof}
Let $(I_1^*,\dots,I_k^*)$ be an optimal partition.
Without loss of generality, we may assume that $\max_{e\in I_1^*}w(e)\ge\dots\ge \max_{e\in I_k^*}w(e)$. 
Let $j$ be any index in $[k]$. 
In addition, let $(i', e_j)$ be the pair of an index and an element attaining $\max_{i \in [p]} \max_{e \in I_{i,j}}w(e)$. 
We claim that $\max_{e \in I^*_j} w(e) \geq w(e_j)$. 
To show this, we suppose the contrary. 
We denote $S = \bigcup_{h < j} I_{i', h} \cup \{e_j\}$. 
Note that $|S| = (j-1)\eta_{i'}+1$ and $w(e) \geq w(e_j)$ for all $e \in S$. 
Since $(E, \mathcal{I})$ is a partition matroid, at most $(j-1)\eta_{i'}$ elements in $S$ are contained in $I^*_1, \ldots, I^*_{j-1}$. 
By assumption $\max_{e \in I^*_j} w(e) < w(e_j)$, there is an index $\ell > j$ such that $I^*_{\ell}$ has some element $e' \in I^*_{\ell} \cap S$. 
Then we have $\max_{e \in I^*_{\ell}}w(e) \geq w(e') \geq w(e_j) > \max_{e\in I^*_j} w(e)$, which contradicts the assumption $\max_{e\in I^*_j} w(e) \geq \max_{e \in I^*_{\ell}}w(e)$. 
%, since $I_{i,j}$ is constructed by a greedy partition.

Thus, we have
\begin{align*}
\max_{e\in I_j^*}w(e)
%=\max_{i\in [p]}\max_{e\in I_j^*\cap S_i}w(e)
\ge w(e_j)=\max_{i\in [p]}\max_{e\in I_{i,j}}w(e)=\max_{e\in \bigcup_{i\in [p]}I_{i,j}}w(e).
\end{align*}
Therefore, $(\bigcup_{i\in [p]} I_{i,1},\dots,\bigcup_{i\in [p]} I_{i,k})$ is also an optimal solution.
\end{proof}
\end{fullpaper}
%This lemma also implies the following. 
%\begin{corollary}
%The minimum $(\sum, \max)$-value matroid partitioning problem with an identical partition matroid is solvable in polynomial time by the greedy algorithm.
%\end{corollary}

\begin{theorem}
The minimum $(\sum,\max)$-value matroid partitioning problem is strongly NP-hard even if the matroids and weights are identical.
\end{theorem}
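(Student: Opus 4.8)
The plan is to reduce from the densest $l$-subgraph problem, which is strongly NP-hard, to the identical-matroid identical-weight minimum $(\sum,\max)$-value problem. Given a graph $G=(V,E_G)$ with $|V|=N$ and an integer $l$, I would build a single matroid $(\widetilde{E},\mathcal{I})$, a single weight $w$, and a number of parts $k$, all of size polynomial in $N$ and $|E_G|$, so that the minimum $(\sum,\max)$-value is a fixed, explicitly computable, strictly decreasing affine function of $\max_{|S|=l}|E_G[S]|$. Because every weight I introduce will be a small integer (only two or three distinct values), polynomial boundedness is automatic, so the reduction yields \emph{strong} NP-hardness.

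The intended instance has two kinds of elements: a heavy element of weight $W$ for each vertex $v\in V$, and a light element of weight $0$ (or a tiny weight) for each edge $e\in E_G$; possibly a few auxiliary dummy elements as well. The driving intuition is that $\sum_{i}\max_{e\in I_i}w(e)$ charges $W$ once for every part that receives at least one heavy (vertex) element, so an optimal partition wants to \emph{concentrate} the vertex-elements into as few parts as possible. I would design $\mathcal{I}$ (by combining a partition matroid on the edge-elements with a uniform/truncation constraint via matroid union, so that the resulting family is provably a matroid) so that a single part can hold the vertex-elements of a set $S$ together with the edge-elements induced by $S$, while every edge not induced by $S$ is forced into a separate, additionally charged part. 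With $k$ and $W$ tuned appropriately, the number of charged parts, and hence the objective, is minimized exactly when $S$ induces the largest number of edges, i.e.\ when $S$ solves densest $l$-subgraph.

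For correctness I would argue both directions. From a densest subgraph $S$ I construct a feasible partition realizing the claimed value. Conversely, from any feasible partition I extract a candidate vertex set $S$ and bound $|E_G[S]|$ from below by its $(\sum,\max)$-value. The role of Lemma~\ref{lem:partition} is to let me restrict attention to partitions of a canonical shape: once the ``selection'' (which vertex-elements share a part) is fixed, the residual problem of distributing the remaining light/edge elements lives on a partition matroid, and by Lemma~\ref{lem:partition} it is solved greedily; hence an optimal partition can be assumed to route the non-induced edges in the greedy, minimum-charge way, which is precisely what makes the objective match the complement count $|E_G|-|E_G[S]|$. Theorem~\ref{thm:mss} guarantees feasibility can be checked, and dummy elements (Proposition~\ref{prop:basepartitioning}) let me pad the instance to a base partition if needed.

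The main obstacle is the matroid design itself: I need a \emph{single} matroid $\mathcal{I}$, identical across all $k$ parts, whose independent sets encode the induced-subgraph relation ``these vertex-elements may coexist with exactly these edge-elements,'' yet which is genuinely a matroid (closed under the exchange axiom) rather than an ad hoc downward-closed family such as the non-matroidal ``edges spanned by at most $l$ vertices.'' Getting this encoding right, so that concentrating heavy elements is equivalent to selecting a vertex set and the objective is \emph{exactly} monotone in the induced-edge count in both directions with all weights polynomially bounded, is the crux; the two implications of the reduction and the greedy characterization via Lemma~\ref{lem:partition} are then comparatively routine.
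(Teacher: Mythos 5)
Your high-level plan (reduce from the densest $l$-subgraph problem, make the optimum an affine, strictly decreasing function of $\max_{|S|=l}|F[S]|$, argue both directions, keep all weights polynomially bounded) matches the paper's, and your intended uses of Theorem~\ref{thm:mss} and Lemma~\ref{lem:partition} are in the right spirit. But there is a genuine gap exactly where you flag it: the gadget you describe --- a single matroid in which one part may ``hold the vertex-elements of a set $S$ together with the edge-elements induced by $S$'' while every non-induced edge is forced into a separately charged part --- does not exist as stated, and it cannot be obtained by the operations you name. A partition matroid on the edge-elements combined with a uniform constraint via matroid union or truncation only sees cardinalities of intersections with fixed blocks; it cannot express the adjacency condition ``this edge-element may coexist only with these two particular vertex-elements,'' and the natural families that do express it (sets of edges spanned by at most $l$ vertices, or sets containing an edge-element only together with its endpoints' vertex-elements) fail downward closure or the exchange axiom, as you yourself observe. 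Since this encoding is the entire technical content of the reduction, the argument as written does not go through.

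For comparison, the paper's construction avoids encoding graph adjacency inside a single independent set altogether. It takes one ``row'' $E_i$ of $n+2m-1$ parallel elements for each vertex $i$ plus $2m$ dummy rows, lets the matroid be the $(n+2m-1)$-truncation of the partition matroid $\{I \mid |I\cap E_i|\le 1\ (\forall i)\}$, and sets $k=n+2m$; in any feasible partition each row is then distributed bijectively over all but one of the $k$ parts, so a partition is essentially a choice of which part misses which row. The selected $l$-subset is encoded by which $n-l$ vertex-rows are the ones missed by the $n-l$ expensive tail parts (the weight-$2m^2$ columns), and each edge $f_t=\{u_t,v_t\}$ is handled by a two-column gadget with weights $t-1$ and $t$ (where only rows $u_t$ and $v_t$ get weight $t$ in column $l+2t-1$) that saves exactly $1$ in the $(\sum,\max)$-value precisely when both endpoints are selected; Lemma~\ref{lem:partition} is invoked only in the converse direction, to lower-bound the contribution of the first $l+2m$ parts via the greedy optimum on the rows indexed by the unselected vertices. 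To repair your write-up you would need to replace the ``induced subgraph in one part'' gadget with something of this kind: the per-edge saving must come from the weights of parallel copies spread \emph{across} parts, not from a matroid constraint tying edge-elements to vertex-elements within a part.
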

\begin{proof}
Let $G=(V, F)$ be an instance of the densest $l$-subgraph problem. 
We denote $V=\{1,\dots,n\}$, $F=\{f_1,\dots,f_m\}$, and $f_i=\{u_i,v_i\}$.
For any vertex set $T \subseteq V$, we denote $F[T]=\{ \{u,v\} \in F\mid \{u,v\} \subseteq T\}$. 

To solve the densest $l$-subgraph problem, it suffices to find a set of $n-l$ vertices such that the set of the other $l$ vertices attain $\max_{T \subseteq V} |F[T]|$. 
We construct a matroid so that every feasible partition of the ground set corresponds to some set of $n-l$ vertices in $V$, and the $(\sum, \max)$-value is the number of edges in the induced subgraph by the other $l$ vertices. 

Let $V' = \{n+1, \ldots, n+2m\}$ be a set of dummy vertices. 
For each $i \in V \cup V'$, we define a set $E_i$ of $n+2m-1$ elements as
\begin{fullpaper}%
\begin{align*}
E_i = \{ e_{ij}\mid j\in\{1,\dots,n+2m-1\} \}. 
\end{align*}
\end{fullpaper}%
\begin{conference}%
$E_i = \{ e_{ij}\mid j\in\{1,\dots,n+2m-1\}\}$. 
\end{conference}%
Let 
\begin{align*}
  \begin{conference}\textstyle\end{conference}
  E=\bigcup_{i=1}^{n+2m} E_i
  \quad\text{and}\quad
  \mathcal{I}=\{I\subseteq E\mid |I|\le n+2m-1,~|I\cap E_i|\le 1~(\forall i\in [n+2m])\}. 
\end{align*}
The resulting matroid is denoted by $(E, \mathcal{I})$, which is a $(n+2m-1)$-truncation of a partition matroid.
%We remark that it is also a graphic matroid.
We set $k=n+2m$. 
The weights of elements are defined as follows: 
\begin{itemize}
\item for each $j =1, \ldots, l-1$, set $w(e_{ij})=0 \quad (\forall i\in[n+2m])$; 

\item for each $j = l+2m, \ldots, n+2m-1$, set $w(e_{ij})=m$ if $i \leq n$, and $w(e_{ij})=2m^2$ if $i \geq n+1$;

\item set $w(e_{ij}) \ (j =l, l+1, \ldots, l+2m-1)$ as follows: 
for each $f_{t}=\{u_{t}, v_{t}\} \ (t = 1, \ldots, m)$, 
\begin{conference}%
\begin{align*}
w(e_{i,l+2t-2})&=\begin{cases}
t-1&(i\in [n]),\\
0  &(i\ge n+1),
\end{cases}
~\text{and}&
w(e_{i,l+2t-1})=\begin{cases}
t&(i\in\{u_{t},v_{t}\}),\\
t-1&(i\in [n]\setminus\{u_{t},v_{t}\}),\\
0&(i \ge n+1).\\
\end{cases}
\end{align*}
\end{conference}%
\begin{fullpaper}%
\begin{align*}
w(e_{i,l+2t-2})&=\begin{cases}
t-1&(i\in \{1, \ldots, n\}),\\
0  &(i\in \{n+1, \ldots, n+2m\}),\\
\end{cases}
\end{align*}
and
\begin{align*}
w(e_{i,l+2t-1})&=\begin{cases}
t&(i\in\{u_{t},v_{t}\}),\\
t-1&(i\in\{1,\dots,n\}\setminus\{u_{t},v_{t}\}),\\
0&(i \in \{n+1, \ldots, n+2m\}).\\
\end{cases}
\end{align*}
\end{fullpaper}%
\end{itemize}
The weight is illustrated in Table~\ref{table:weight}.
\begin{table}[h]
\caption{The weight of each element $e_{ij}$, where each row corresponds to $i$ and each column corresponds to $j$.}\label{table:weight}
\centering
\scalebox{0.8}{%
$\begin{array}{c||cccccccccc|ccc}
i\backslash j&1&\cdots&l-1&l&\cdots&l+2t-2&l+2t-1&l+2t&\cdots&l+2m-1&l+2m&\cdots&n+2m-1 \\ \hline\hline
1&0&\cdots&0&0&&t-1&t-1&t  &&&m&\cdots&m \\
\vdots&\vdots&&\vdots&\vdots&&\vdots&\vdots&\vdots&&&\vdots&&\vdots \\
\vdots&\vdots&&\vdots&\vdots&&t-1&t-1&t  &&&\vdots&&\vdots \\
u_t&\vdots&&\vdots&\vdots&&t-1&t  &t  &&&\vdots&&\vdots \\
\vdots&\vdots&&\vdots&\vdots&&t-1&t-1&t  &&&\vdots&&\vdots \\
\vdots&\vdots&&\vdots&\vdots&&\vdots&\vdots&\vdots&&&\vdots&&\vdots \\
\vdots&\vdots&&\vdots&\vdots&&t-1&t-1&t  &&&\vdots&&\vdots \\
v_t&\vdots&&\vdots&\vdots&&t-1&t  &t  &&&\vdots&&\vdots \\
\vdots&\vdots&&\vdots&\vdots&&t-1&t-1&t  &&&\vdots&&\vdots \\
\vdots&\vdots&&\vdots&\vdots&&\vdots&\vdots&\vdots&&&\vdots&&\vdots \\
n&0&\cdots&0&0&&t-1&t-1&t  &&&m&\cdots&m \\ \hline
n+1&0&\cdots&0&0&    \cdots&0&0&0&    \cdots&0&2m^2&\cdots&2m^2 \\
\vdots&\vdots&&\vdots&\vdots&&\vdots&\vdots&\vdots&&\vdots&\vdots&&\vdots \\
n+2m&0&\cdots&0&0&\cdots&0&0&0&\cdots&0&2m^2&\cdots&2m^2\\ \hline
\end{array}
$}
\end{table}
We remark that $|E| = (n+2m)(n+2m-1)$. 
By the definition of the matroid, for every $i \in [n+2m]$, all elements in $E_i$ belong to different independent sets from each other. 
Thus, for any feasible partition of $E$, each independent set has $n+2m-1$ elements which consist of one element from each $E_i$ except one set. 
%The independent sets having elements with weight $2m^2$.

\begin{conference}
The following claim now proves the theorem.
The proof can be found in Appendix. \qedhere
\end{conference}
\begin{fullpaper}
It remains to show that the resulting instance is equivalent to the densest $l$-subgraph problem instance $(G=(V,F), l)$. 
\end{fullpaper}
\begin{claim}\label{claim:sNP}
Let $\alpha \in \{0, \ldots, m\}$. 
The graph $G$ has a vertex set $T^*$ with $|T^*|=l$ and $|F[T^*]|\ge\alpha$ if and only if there exists a feasible partition $(I_1, \ldots, I_k)$ of $E$ with $(\sum,\max)$-value at most $2m^2(n-l)+m^2+m-\alpha$.
\end{claim}
\begin{fullpaper}
%\begin{proof}[Proof of Claim \ref{claim:sNP}]
%$\Rightarrow$
First, we assume that there exists $T^*\subseteq V$ such that $|T^*|=l$ and $|F[T^*]|\ge\alpha$. 
Without loss of generality, we assume that $T^*=\{1,\dots,l\}$ and $V\setminus T^*=\{l+1,\dots,n\}$.
We show that there exists a partition such that its $(\sum,\max)$-value is at most \(2m^2(n-l)+m^2+m-\alpha\). 
We denote $E^j[p,q]=\{e_{p,j},\dots,e_{q,j}\}$.
Let $J_1=\{1,\dots,l\}$, $J_2=\{l+1,\dots,l+2m\}$, and $J_3=\{l+2m+1,\dots,n+2m\}$.
We construct a partition $(I^*_1, \ldots, I^*_{n+2m})$ of $E$ as follows:
\begin{align*}
I_j^*&=\begin{cases}
E^{j-1}[1,j-1]\cup E^j[j+1,n+2m] &(j\in J_1),\\
E^{j-1}[1,l]\cup E^j[l+1,n+2m+l-j]\cup E^{j-1}[n+2m+l-j+2,n+2m]&(j\in J_2),\\
E^{j-1}[1,j-2m-1]\cup E^j[j-2m+1,n]\cup E^{j-1}[n+1,n+2m]&(j\in J_3).
\end{cases}
\end{align*}
%Recall that $k=n+2m$.
Then, the maximum weight of each independent set is
\begin{align*}
\max_{e\in I_j^*}w(e)
&=\begin{cases}
0&(j\in J_1),\\
t-1&(j=l+2t-1\in J_2,~t=1,\dots,m,~ \{u_t, v_t\}\in F[T^*]),\\
t  &(j=l+2t-1\in J_2,~t=1,\dots,m,~\{ u_t, v_t\}\not\in F[T^*]),\\
t  &(j=l+2t\in J_2,~t=1,\dots,m),\\
2m^2 & (j\in J_3).\\
\end{cases}
\end{align*}
Thus, the $(\sum,\max)$-value is at most
\begin{align*}
0\cdot l+\sum_{t=1}^m (2t) - |F[T^*]|+2m^2\cdot (n-l)
\le 2m^2(n-l)+m^2+m-\alpha. 
\end{align*}

%$\Leftarrow):$
Conversely, we assume that there exists a feasible partition $(I_1,\dots,I_{k})$ of $E$ such that $\max_{e\in I_1}w(e)\le\dots\le \max_{e\in I_{k}}w(e)$, and
\begin{align*}
\sum_{j\in[k]}\max_{e\in I_j}w(e) \le 2m^2(n-l)+m^2+m-\alpha. 
\end{align*}
%Since the optimal sum max value is at least $\sum_{j=1}^{n+2m-1}s_{n+2m,j}=2m^2(n-k)+m^2$, it holds that $0\le \alpha \le m$.
All elements in $E_{k}$ must be contained in different $I_j$'s from each other by definition of $(E, \mathcal{I})$. 
Hence at least $n-l$ sets contain elements $e$ with $w(e) = 2m^2$. 
If $\max_{e\in I_j}w(e) \geq 2m^2$ holds for some $j \leq l+2m$, then the objective value is at least $2m^2(n-l+1)>2m^2(n-l)+m^2+m-\alpha$. 
Thus, each of $I_{l+2m+1}, \ldots, I_{k}$ contains $2m$ elements with weight $2m^2$, and none of $I_1, \ldots, I_{l+2m}$ contains such elements. 
%\begin{align*}
%\{s_{ij}\mid i=n+1,\dots,n+2m,~j=k+2m+1,\dots,n+2m\}
%\subseteq \bigcup_{j=k+2m+1}^{n+2m}I_j.
%\end{align*}
Let 
\begin{align*}
U=\{i \mid |E_i \cap I_j|=0 ~(\exists j\in\{l+2m+1,\dots,k\})\}.
\end{align*}
Note that $|U|=n-l$ and $U \subseteq \{1, \ldots, n\}$.
Here, we have
\begin{align*}
2m^2(n-l)+m^2+m-\alpha
&\ge \sum_{j\in [k]}\max_{e\in I_j}w(e)\\
&= 2m^2(n-l)+\sum_{j\in [l+2m]}\max_{e\in I_j}w(e). 
\end{align*}
In order to obtain a lower bound of $\sum_{j\in [l+2m]}\max_{e\in I_j}w(e)$,
we define $E' = \{e_{ij} \mid i \in U, \ j=1, \ldots, l+2m \}$. 
Let $(E', \mathcal{I}')$ be a partition matroid where $\mathcal{I}' = \{I' \mid |I' \cap E_i| \leq 1 \ (\forall i \in U) \}$. 
We observe that $\sum_{j\in[l+2m]}\max_{e\in I_j}w(e) \ge \sum_{j\in[l+2m]}\max_{e\in I_j \cap E'}w(e)$, and $(I_1 \cap E', \ldots, I_{l+2m} \cap E')$ is a feasible partition to the $(\sum,\max)$ problem instance $(E', (\mathcal{I}',w), l+2m)$.  
By Lemma \ref{lem:partition}, an optimal solution to $(E', (\mathcal{I}',w), l+2m)$ can be obtained by a greedy algorithm. 
Let $(I'_1, \ldots, I'_{l+2m})$ be an output solution of the greedy algorithm. 
Then we have 
\begin{align*}
\sum_{j\in[l+2m]}\max_{e\in I_j}w(e)
&\geq \sum_{j\in[l+2m]}\max_{e\in I'_j}w(e)
= m+\sum_{l=1}^m 2(l-1) + | \{ \{u,v\} \mid | \{u,v\} \cap U| \geq 1 \}| \\
&\geq m^2 + m - | F[V\setminus U]|. 
\end{align*}
This implies $|F[V\setminus U]| \geq \alpha$. 
Therefore, $T=V \setminus U$ is a vertex set with $|T|=l$ and $|F[T]| \geq \alpha$. 

This proves the theorem.
\end{fullpaper}%
\end{proof}

Note that the matroid $(E, \mathcal{I})$ in the above proof is graphic
because it can be seen as a matroid corresponding to a cycle with $n+2m$ vertices and
each adjacent vertices is connected by $n+2m-1$ multiple edges.
Thus, the maximum total capacity spanning tree partition problem is NP-hard.

\subsection{PTAS for the identical case}\label{sec:msM-PTAS}
In this subsection,
we provide a PTAS for the minimum $(\sum,\max)$-value matroid partitioning problem with identical matroids and weights.
%This is the best possible result (unless P=NP) because the problem is strongly NP-hard as we will see later.
This is the best possible result (unless P=NP) because the problem is strongly NP-hard as we proved in the previous subsection.

We start with the following observation, which will be also useful in Section \ref{sec:msM-alg-general}. 
\begin{proposition}
Let $(E,(\mathcal{I}_i,w_i)_{i\in[k]})$ be any instance of the minimum $(\sum,\max)$-value matroid partitioning problem, and let $(I_1^*,\dots,I_k^*)$ be an optimal solution. 
When we know $\max_{e \in I^*_i} w_i(e)$ for all $i\in[k]$, 
we can easily compute a feasible partition $(I_1,\dots,I_k)$
such that \(\sum_{i\in[k]}\max_{e\in I_i}w_i(e)\le \sum_{i\in [k]}\max_{e\in I^*_i}w_i(e)\).
\end{proposition}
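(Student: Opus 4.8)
The plan is to reduce the task to a pure \emph{feasibility} question, which is solvable in polynomial time by Theorem~\ref{thm:mss}. Write $W_i=\max_{e\in I_i^*}w_i(e)$ for the known threshold of part $i$. The guiding observation is that, once these thresholds are fixed, assigning to part $i$ any element $e$ with $w_i(e)>W_i$ can only hurt us, whereas if every part $i$ receives only elements of weight at most $W_i$, then automatically $\max_{e\in I_i}w_i(e)\le W_i$, so the total $(\sum,\max)$-value is at most $\sum_{i\in[k]}W_i=\OPT$. Hence it suffices to forbid the heavy elements in each part and then merely find \emph{some} feasible partition.

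Concretely, for each $i\in[k]$ I would set $E_i=\{\,e\in E\mid w_i(e)\le W_i\,\}$ and define the matroid $(E,\mathcal{I}_i')$ by $\mathcal{I}_i'=\{\,X\in\mathcal{I}_i\mid X\subseteq E_i\,\}$; this is the restriction $\mathcal{I}_i|E_i$ regarded as a matroid on the common ground set $E$ in which the elements of $E\setminus E_i$ are loops, and its independence oracle is obtained from that of $\mathcal{I}_i$ by first testing membership in $E_i$. The crucial point is that the optimal solution itself certifies that these restricted matroids still admit a feasible partition: since every element of $I_i^*$ has weight at most $W_i$ we have $I_i^*\subseteq E_i$, and $I_i^*\in\mathcal{I}_i$, hence $I_i^*\in\mathcal{I}_i'$. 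Thus $(I_1^*,\dots,I_k^*)$ is a feasible partition of $E$ with respect to $(E,\mathcal{I}_1'),\dots,(E,\mathcal{I}_k')$.

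Therefore Theorem~\ref{thm:mss} applies and finds, in polynomial time, some feasible partition $(I_1,\dots,I_k)$ of $E$ with $I_i\in\mathcal{I}_i'$ for all $i$. For this partition every $e\in I_i$ satisfies $w_i(e)\le W_i$, so $\max_{e\in I_i}w_i(e)\le W_i$, and summing yields $\sum_{i\in[k]}\max_{e\in I_i}w_i(e)\le\sum_{i\in[k]}W_i=\sum_{i\in[k]}\max_{e\in I_i^*}w_i(e)$, as required. I do not expect a genuine obstacle here; the only points needing care are that $(E,\mathcal{I}_i')$ is a bona fide matroid with an efficient oracle (immediate from the restriction operation) and that the nonemptiness condition in the definition of a feasible partition is met, which can be guaranteed exactly as in Proposition~\ref{prop:basepartitioning} by passing to the base-partitioning formulation so that the partition returned by Theorem~\ref{thm:mss} has all parts nonempty.
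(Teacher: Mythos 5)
Your proof is correct and follows exactly the paper's approach: restrict each matroid $\mathcal{I}_i$ to $\{e \mid w_i(e) \le \max_{e^* \in I_i^*} w_i(e^*)\}$, observe that the optimal partition itself certifies feasibility of the restricted instance, and invoke Theorem~\ref{thm:mss} to find some feasible partition, which then has $(\sum,\max)$-value at most the optimum. The paper's proof is a two-line version of the same argument; your additional care about the oracle and the nonemptiness condition is sound but not a departure in method.
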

\begin{proof}
The feasible partitions for matroids $(E,\mathcal{I}_i|\{e\mid w_i(e)\le \max_{e^*\in I^*_i}w_i(e^*)\})_{i\in[k]}$ satisfy the condition.
Thus, we can find one of them in polynomial time by Theorem~\ref{thm:mss}.
\end{proof}

Let $(E,(\mathcal{I},w),k)$ be a problem instance, and let $\varepsilon< 1/2$ be a positive number. 
We write $w^{\max}=\max_{e\in E}w(e)$. 
Let $(I_1^*,\dots,I_k^*)$ be an optimal solution.

The idea of the algorithm is to guess the maximum weights.
Since the number of possibilities of the maximum weights is at most $n^k$,
we can solve the problem by solving the feasibility of matroid partitioning problems $n^k$ times.
Thus, we can solve the problem efficiently when $k$ is small, but not in polynomial time.
In order to reduce the possibilities, we guess $\max_{e \in I_i^*} w(e)$ only for some $i$'s.
Without loss of generality, we assume that $\max_{e\in I_1^*}w(e)\ge\dots\ge \max_{e\in I_k^*}w(e)$.
%This assumption is valid since matroids and weights are both identical.
We define a set $J=\{i_1,\dots,i_s\}$ of indices by
\[
i_j = \begin{cases}
j & (j =1, \ldots, \lfloor 1/\varepsilon^2 \rfloor), \\
\lfloor (1+\varepsilon)^t/\varepsilon^2 \rfloor & (j=\lfloor 1/\varepsilon^2 \rfloor+t, \ t=1, \ldots, \lfloor\log_{1+\varepsilon}(k \varepsilon^2)\rfloor). 
\end{cases}
%J=\{i_1,\dots,i_s\}=\{ 1, 2, \ldots, \lfloor 1/\varepsilon^2 \rfloor \} \cup \{\lfloor (1+\varepsilon)^t 1/\varepsilon^2 \rfloor\mid t=1,\dots,\lfloor\log_{1+\varepsilon}(k \varepsilon^2)\rfloor\}
\]
By definition, it holds that $1=i_1<i_2<\dots<i_s\le k$, and $s = \lfloor1/\varepsilon^2\rfloor + \lfloor\log_{1+\varepsilon}(k\varepsilon^2)\rfloor$. 
Note that for any $j =\lfloor 1/\varepsilon^2 \rfloor+t$ and $t \geq 1$, 
we have %$i_{j}-i_{j-1} \geq (1+\varepsilon)^{t-1}/\varepsilon - 1 \geq 1/\varepsilon - 1 > 0$.
\begin{align*}
i_{j}-i_{j-1} 
\ge ((1+\varepsilon)^t/\varepsilon^2-1)-((1+\varepsilon)^{t-1}/\varepsilon^2)
= (1+\varepsilon)^{t-1}/\varepsilon - 1 \geq 1/\varepsilon - 1 > 1
\end{align*}
as $\varepsilon<1/2$.
%(1+\varepsilon)/\varepsilon^2 - 1/2\varepsilon^2 = (1/2+\varepsilon)/\varepsilon^2 > 0$ since $\varepsilon < 1/2$. 
For notational convenience, we denote $i_0=0$ and $i_{s+1}=k+1$.

To reduce the number of possibilities more, we round the weights $w(e)$. 
For all $e \in E$, define
\begin{align*}
w'(e)=\begin{cases}
\frac{(1+\varepsilon)^t  w^{\max}}{k}\varepsilon& \left(\frac{(1+\varepsilon)^{t}w^{\max}}{k}\varepsilon\le w(e)< \frac{(1+\varepsilon)^{t+1} w^{\max}}{k}\varepsilon, \  t\in\{0,1,\dots,\lfloor\log_{1+\varepsilon}(\frac{k}{\varepsilon})\rfloor\} \right),\\
0                      & \left( w(e)< \frac{w^{\max}}{k}\varepsilon\right).
\end{cases}
\end{align*}

Our algorithm guesses $\max_{e\in I_{i_j}^*}w'(e)$ for each $i_j\in J$.
We write $u_j^*$ for the value.
Then, it finds a feasible partition $(I_1,\dots,I_k)$ that satisfies
$\max_{e\in I_1}w(e)\ge\dots\ge \max_{e\in I_k}w(e)$ and 
$\max_{e\in I_{i_j}} w'(e)\le u_j^*$ for all $i_j\in J$.
The algorithm is summarized in Algorithm \ref{alg:msM}. 
\begin{algorithm}
  \caption{PTAS for the $(\sum,\max)$ problem with identical matroids and weights}\label{alg:msM}
  \ForEach{$u_{1},\dots,u_{s}\in\{0\}\cup\left\{\frac{(1+\varepsilon)^t w^{\max}}{k}\varepsilon\mid t=0,\dots,\lfloor\log_{1+\varepsilon}(k/\varepsilon)\rfloor\right\}$ \textrm{such that} $u_{1}\ge\dots\ge u_{s}$}{
    find a partition $(I_1,\dots,I_k)$ such that $I_i\in (\mathcal{I}|\{e\mid w'(e)\le u_j\})$ for each
    $i_j\le i<i_{j+1},~j=1,\dots,s$ if such a partition exists\;
  }
  \Return the best solution $(I_1, \ldots, I_k)$ among the obtained partitions\;
\end{algorithm}

\begin{theorem}
%There exists a PTAS algorithm for $(\sum,\max)$ problem when the matroids and weights are identical.
Algorithm \ref{alg:msM} is a PTAS algorithm for the minimum $(\sum,\max)$-value matroid partitioning problem with identical matroids and weights. 
\end{theorem}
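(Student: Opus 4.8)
The plan is to establish two properties of Algorithm~\ref{alg:msM}: that it runs in polynomial time for each fixed $\varepsilon$, and that the partition it returns has $(\sum,\max)$-value at most $(1+O(\varepsilon))\OPT$, where $\OPT=\sum_{i\in[k]}\max_{e\in I_i^*}w(e)$ for an optimal $(I_1^*,\dots,I_k^*)$ sorted so that the values $M_i^*:=\max_{e\in I_i^*}w(e)$ are nonincreasing. Rescaling $\varepsilon$ by a constant then gives the claimed $(1+\varepsilon)$-guarantee. For the running time I would first count the guesses: each $u_j$ ranges over $r:=\lfloor\log_{1+\varepsilon}(k/\varepsilon)\rfloor+2$ values, and the algorithm only enumerates the nonincreasing sequences $u_1\ge\dots\ge u_s$, i.e.\ multisets of size $s$ over $r$ values, of which there are at most $2^{r+s-1}$ (the bound $\binom{r+s-1}{r}\le 2^{r+s-1}$ noted in the introduction). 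Since both $r$ and $s=\lfloor1/\varepsilon^2\rfloor+\lfloor\log_{1+\varepsilon}(k\varepsilon^2)\rfloor$ are $O_\varepsilon(\log k)$, we get $r+s=O_\varepsilon(\log k)$, so $2^{r+s}=k^{O_\varepsilon(1)}$ is polynomial in $k$ for fixed $\varepsilon$; each guess invokes the feasibility routine of Theorem~\ref{thm:mss} once, in $\poly(n,k)$ time.

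Next I would single out the \emph{correct} guess $u_j^*:=\max_{e\in I^*_{i_j}}w'(e)$ and show it is both enumerated and feasible. Because $w'$ is a nondecreasing function of $w$, the element of largest $w$-weight in each $I_i^*$ also attains $\max_{e\in I_i^*}w'(e)$, so this value is simply the rounded value of $M_i^*$; monotonicity of $M_i^*$ then gives $u_1^*\ge\dots\ge u_s^*$, which is one of the enumerated sequences. For this guess the optimum itself is feasible: for $i_j\le i<i_{j+1}$ we have $i\ge i_j$, hence $\max_{e\in I_i^*}w'(e)\le\max_{e\in I^*_{i_j}}w'(e)=u_j^*$, i.e.\ $I_i^*\in\mathcal{I}|\{e:w'(e)\le u_j^*\}$, so Theorem~\ref{thm:mss} finds some partition under this guess.

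For the approximation bound, let $(I_1,\dots,I_k)$ be the partition returned under the correct guess. The rounding satisfies $w(e)\le(1+\varepsilon)w'(e)+\tfrac{\varepsilon}{k}w^{\max}$ for every $e$, so $\sum_i\max_{e\in I_i}w(e)\le(1+\varepsilon)\sum_i\max_{e\in I_i}w'(e)+\varepsilon w^{\max}$, and $w^{\max}\le\OPT$ because the globally heaviest element lies in some $I_i^*$. The feasibility constraints give $\max_{e\in I_i}w'(e)\le u_j^*\le M^*_{i_j}$ for $i\in[i_j,i_{j+1})$, whence $\sum_i\max_{e\in I_i}w'(e)\le\sum_{j=1}^s(i_{j+1}-i_j)M^*_{i_j}$. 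It therefore suffices to prove the purely combinatorial inequality
\[\sum_{j=1}^{s}(i_{j+1}-i_j)\,M^*_{i_j}\le(1+O(\varepsilon))\,\OPT,\]
since combining it with the two displayed estimates and $w^{\max}\le\OPT$ yields $\ALG\le(1+O(\varepsilon))\OPT$.

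This last inequality is the heart of the argument and the step I expect to be the main obstacle, as it is exactly where the structure of $J$ is used. Writing $B_j=\{i_j,\dots,i_{j+1}-1\}$ and $s_0=\lfloor1/\varepsilon^2\rfloor$, I would split the sum into three parts. On the linear head $j<s_0$ one has $|B_j|=1$ and $i_j=j$, so those terms equal $\sum_{j<s_0}M_j^*$ exactly. On the geometric tail $j\ge s_0+1$ I would charge each block to its predecessor: since $M^*$ is nonincreasing and $i<i_j$ for $i\in B_{j-1}$, we have $M^*_{i_j}\le M^*_i$ there, so $(i_{j+1}-i_j)M^*_{i_j}\le\frac{|B_j|}{|B_{j-1}|}\sum_{i\in B_{j-1}}M_i^*$, and the spacing $i_{s_0+t}=\lfloor(1+\varepsilon)^t/\varepsilon^2\rfloor$ together with the block-length bound $|B_{j-1}|\ge1/\varepsilon-1$ from the excerpt forces $|B_j|/|B_{j-1}|\le1+O(\varepsilon)$; summing over the disjoint predecessors gives $(1+O(\varepsilon))\OPT$. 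The only remaining term is the single transition block $j=s_0$, bounded crudely by $(i_{s_0+1}-i_{s_0})M^*_{s_0}$: here $i_{s_0+1}-i_{s_0}=O(1/\varepsilon)$ while $M^*_{s_0}\le\OPT/s_0=O(\varepsilon^2)\OPT$ (the first $s_0$ sorted values each dominate $M^*_{s_0}$), so it costs only $O(\varepsilon)\OPT$. This is precisely why the head has length $\Theta(1/\varepsilon^2)$ — long enough that the unavoidable ``bad'' transition block is cheap, yet short enough that enumerating it costs only $2^{O(1/\varepsilon^2)}$ guesses. The fiddly part throughout is controlling the floor functions in the ratio $|B_j|/|B_{j-1}|$, but the assumption $\varepsilon<1/2$ and the block-length lower bound keep every correction at the $O(\varepsilon)$ level, after which assembling the three parts and rescaling $\varepsilon$ completes the proof.
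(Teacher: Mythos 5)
Your proposal is correct and follows essentially the same route as the paper's proof: the same count of nonincreasing guess sequences via multisets for the running time, the same verification that the rounded optimal sequence is enumerated and feasible, and the same three-way split of the charging sum into the linear head, the single transition block (bounded via $M^*_{\lfloor 1/\varepsilon^2\rfloor}\le \OPT/\lfloor 1/\varepsilon^2\rfloor$), and the geometric tail (bounded via the ratio of consecutive gaps being $1+O(\varepsilon)$). Your charging of each tail block to the sum over its predecessor block is just a reorganization of the paper's two-step argument ($\OPT\ge\sum_j(i_j-i_{j-1})u_j^*$ followed by a termwise gap comparison), not a genuinely different method.
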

\begin{proof}
Let $(I_1^*,\dots,I_k^*)$ be an optimal solution to the problem
and $(I_1,\dots,I_k)$ be the output of Algorithm \ref{alg:msM}.
Without loss of generality, we assume that $\max_{e\in I_1^*}w(e)\ge\dots\ge \max_{e\in I_k^*}w(e)$. 
Let $u_j^*=\max_{e\in I_{i_j}^*}w'(e)$ for each $i_j\in J$. 

We first analyze the running time of Algorithm \ref{alg:msM}.
\begin{claim}\label{claim:PTAS-time}
Algorithm \ref{alg:msM} runs in polynomial time with respect to $k$ for fixed $\varepsilon$. 
\end{claim}
\begin{proof}[Proof of Claim \ref{claim:PTAS-time}.]
Let $r=\lfloor\log_{1+\varepsilon}(k/\varepsilon)\rfloor+2$.
%We show that the number of possible combinations of values $u_1,\dots,u_s$ is $\binom{r+s}{r}$. 
%Algorithm \ref{alg:msM} enumerates $(u_1,\dots,u_s)$ such that (i) $u_1 \geq \dots \geq u_s$ and (ii) all $u_i$'s are selected from $r$ values. 
%We claim that any such $(u_1,\dots,u_s)$ corresponds some shortest path in a grid from the origin $(0,0)$ to the point $(s,r)$. 
%For notational convenience, we regard $u_i$ as a number $\alpha \in \{1,\ldots, r\}$ such that $u_i$ is the $\alpha$th lowest value. 
%Any shortest path in a grid from $(0,0)$ to $(s,r)$ uses horizontal moves $(x,y) \rightarrow (x+1, y)$ and vertical moves $(x,y) \rightarrow (x, y+1)$. 
%We regard such a shortest path as a sequence consisting of $s$ horizontal moves and $r$ vertical moves. 
%
%Then for any $(u_1,\dots,u_s)$, we construct a shortest path $P$ as follows: for each $i=0, 1, \ldots, s$, add $u_{i+1}-u_{i}$ vertical moves and one horizontal move to $P$, where $u_0 = 0$ and $u_{s+1}=r+1$. 
%Conversely, $(u_1,\dots,u_s)$ is recovered by letting $u_{i}$ be the number of vertical moves until the $i$th horizontal moves for $i \geq 1$. 
%
%We observe that this is a one-to-one correspondence. 
%Therefore, it suffices to count the number of shortest paths from $(0,0)$ to $(s,r)$, and this is $\binom{r+s}{r}$. 
We observe that any choice of a possible combination of values $u_1,\dots,u_s$ corresponds a multisubset of size $s$ from the set of $r$ values. 
Thus the number of possible combinations is $\binom{r+s-1}{s}$.  
Furthermore, we have
\begin{align*}
  \binom{r+s-1}{s}
  &\le \sum_{l=0}^{r+s-1} \binom{r+s-1}{l}
  = 2^{r+s-1}
  \le 2^{(\log_{1+\varepsilon}(k/\varepsilon) + 2) + (1/\varepsilon^2 + \log_{1+\varepsilon}(k\varepsilon^2))}\\
  &\le 2^{2\log_{1+\varepsilon}k + 2 + 1/\varepsilon^2}
  = 2^{2+1/\varepsilon^2}\cdot k^{\log_{1+\varepsilon}4}.
\end{align*}
This is a polynomial with respect to $k$ for fixed $\varepsilon$.
Thus, the algorithm runs in polynomial time.
\end{proof}
Note that, without the restriction $u_{1}\ge\dots\ge u_{s}$,
the number of possible combinations of values $u_1,\dots,u_s$ is $r^s=k^{\Theta(\log\log k)}$, which is not polynomial with respect to $k$.

In the remainder, we show the approximation ratio of the algorithm.
\begin{claim}\label{claim:PTAS-approx}
Let $\OPT$ denote the optimal value and let $\ALG$ denote the $(\sum, \max)$-value of $(I_1, \ldots, I_k)$. 
Then it holds that $\ALG \leq (1+15.5\varepsilon)\OPT$. 
\end{claim}
\begin{proof}[Proof of Claim \ref{claim:PTAS-approx}.]
First, $\OPT$ is at least
\begin{conference}
\(\OPT=\sum_{i\in [k]} \max_{e\in I_i^*}w(e) \ge \sum_{i\in [k]} \max_{e\in I_i^*}w'(e) \ge \sum_{j=1}^s (i_j-i_{j-1})u_j^*\).
\end{conference}
\begin{fullpaper}
\begin{align*}
  \OPT=\sum_{i\in [k]} \max_{e\in I_i^*}w(e)
  \ge \sum_{i\in [k]} \max_{e\in I_i^*}w'(e)
  \ge \sum_{j=1}^s (i_j-i_{j-1})u_j^*.
\end{align*}
\end{fullpaper}
Let $(I'_1, \ldots, I'_k)$ be a feasible partition of $E$ obtained at line 2 in Algorithm \ref{alg:msM} using $u_1^*, \ldots, u_s^*$. 
Then $\ALG$ is at most
\begin{conference}
\begin{align}
  \ALG
  &=  \sum_{i\in [k]} \max_{e \in I_i} w(e)  \leq \sum_{i\in[k]} \max_{e \in I'_i} w(e) \notag\\
  &\le \sum_{j=1}^s (i_{j+1}-i_{j}) \max_{e \in I'_{i_j}} w(e) 
  \le \sum_{j=1}^s (i_{j+1}-i_{j}) \left((1+\varepsilon)u_j^* + \frac{w^{\max}}{k}\varepsilon \right) \notag\\
  & \le \sum_{j=1}^s (i_{j+1}-i_{j})(1+\varepsilon)u_j^*+k\cdot\frac{w^{\max}}{k}\varepsilon
  \le (1+\varepsilon)\sum_{j=1}^s (i_{j+1}-i_{j})u_j^*+\varepsilon\cdot \OPT. \label{eq:msM PTAS 0}
\end{align}
\end{conference}
\begin{fullpaper}
\begin{align}
\ALG &=  \sum_{i\in [k]} \max_{e \in I_i} w(e)  \leq \sum_{i\in[k]} \max_{e \in I'_i} w(e) \notag\\
     & \le \sum_{j=1}^s (i_{j+1}-i_{j}) \max_{e \in I'_{i_j}} w(e) \notag\\
     & \le \sum_{j=1}^s (i_{j+1}-i_{j}) \left((1+\varepsilon)u_j^* + \frac{w^{\max}}{k}\varepsilon \right) \notag\\
     & \le \sum_{j=1}^s (i_{j+1}-i_{j})(1+\varepsilon)u_j^*+k\cdot\frac{w^{\max}}{k}\varepsilon
       \le (1+\varepsilon)\sum_{j=1}^s (i_{j+1}-i_{j})u_j^*+\varepsilon\cdot \OPT. \label{eq:msM PTAS 0}
\end{align}
\end{fullpaper}
Here, the third inequality holds by the definition of $w'$ and $\max_{e \in I'_{i_j}} w'(e)\le u_j^*$.

We derive an upper bound on $\sum_{j=1}^s (i_{j+1}-i_{j})u^*_j$. 
%$\{1, \ldots,\lfloor 1/\varepsilon^2 \rfloor -1 \}, \ \lfloor 1/\varepsilon^2 \rfloor, \{\lfloor 1/\varepsilon^2 \rfloor +1, \ldots, s\}$. 
To simplify notation, let $q = \lfloor 1/\varepsilon^2 \rfloor$. 
First, since $i_{j+1}-i_{j} = i_{j}-i_{j-1} = 1$ holds for any $j =1, \ldots, q -1$, we have 
\begin{align}\label{eq:msM PTAS 1}
\begin{conference}\textstyle\end{conference}\sum_{j=1}^{q -1} (i_{j+1}-i_{j})u^*_j = \sum_{j=1}^{q-1} (i_{j}-i_{j-1})u^*_j.
\end{align}

Second, we evaluate $(i_{q+1}-i_q)u^*_q$. 
Note that $i_q = q = \lfloor 1/\varepsilon^2 \rfloor$ and $i_{q+1} = \lfloor (1+\varepsilon)/\varepsilon^2 \rfloor$. 
Thus $i_{q+1}-i_q \leq (1+\varepsilon)/\varepsilon^2 - (1/\varepsilon^2 - 1) = (1+\varepsilon)/\varepsilon$. 
Moreover, 
\begin{conference}
$u^*_q = \max_{e \in I^*_{q}} w'(e) \leq \max_{e\in I^*_q} w(e) \leq \OPT/q$,
\end{conference}
\begin{fullpaper}
\begin{align*}
u^*_q = \max_{e \in I^*_{q}} w'(e) \leq \max_{e\in I^*_q} w(e) \leq \OPT/q, 
\end{align*}
\end{fullpaper}
because $\OPT = \sum_{i\in[k]} \max_{e\in I^*_i} w(e) \geq \sum_{i\in[q]} \max_{e\in I^*_i} w(e) \geq q \cdot \max_{e\in I^*_q} w(e)$. 
We remark that %$1/q \leq 2\varepsilon^2$ as $\varepsilon < 1/2$.
$1/q = 1/\lfloor 1/\varepsilon^2 \rfloor \le 1/(1/\varepsilon^2-1)=\varepsilon^2/(1-\varepsilon^2) < \frac{4}{3}\varepsilon^2 < 2\varepsilon^2$ as $\varepsilon < 1/2$.
Therefore, it follows that 
\begin{align}\label{eq:msM PTAS 2}
(i_{q+1}-i_q)u^*_q \leq 2\varepsilon(1+\varepsilon) \OPT. 
\end{align}

Lastly, let $j \in \{q+1, \ldots, s\}$, and let $t~(\ge 1)$ be the integer such that $i_j = \lfloor (1+\varepsilon)^t/\varepsilon^2 \rfloor$ (i.e., $t=j-q$). 
We observe that $i_{j}-i_{j-1} \geq (1+\varepsilon)^{t-1}/\varepsilon -1$. 
In addition, we have
\begin{conference}
\begin{align*}
i_{j+1}-i_{j}
&\le \left(\frac{(1+\varepsilon)^{t+1}}{\varepsilon^2}\right) - \left(\frac{(1+\varepsilon)^{t}}{\varepsilon^2}-1\right)
=\frac{(1+\varepsilon)^{t}}{\varepsilon}+1 \\
&\le \frac{(1+\varepsilon)/\varepsilon+1}{(1+\varepsilon)^0/\varepsilon-1}\left(\frac{(1+\varepsilon)^{t-1}}{\varepsilon}-1\right)
\le \frac{1+2\varepsilon}{1-\varepsilon}(i_{j}-i_{j-1})
< (1+6\varepsilon)(i_{j}-i_{j-1}),
\end{align*}%
\end{conference}%
\begin{fullpaper}
\begin{align*}
i_{j+1}-i_{j}
&\le \left(\frac{(1+\varepsilon)^{t+1}}{\varepsilon^2}\right) - \left(\frac{(1+\varepsilon)^{t}}{\varepsilon^2}-1\right)
=\frac{(1+\varepsilon)^{t}}{\varepsilon}+1 \\
&\le \frac{(1+\varepsilon)/\varepsilon+1}{(1+\varepsilon)^0/\varepsilon-1}\left(\frac{(1+\varepsilon)^{t-1}}{\varepsilon}-1\right)\\
&\le \frac{1+2\varepsilon}{1-\varepsilon}(i_{j}-i_{j-1})
< (1+6\varepsilon)(i_{j}-i_{j-1}),
\end{align*}
\end{fullpaper}%
where the second inequality holds since $\frac{(1+\varepsilon)^x/\varepsilon+1}{(1+\varepsilon)^{x-1}/\varepsilon-1}$
is monotone decreasing for $x\ge 1$ and
the last inequality holds since $\varepsilon < 1/2$. 
Therefore, it follows that 
\begin{align}\label{eq:msM PTAS 3}
\begin{conference}\textstyle\end{conference}\sum_{j=q+1}^{s} (i_{j+1}-i_{j})u^*_j = \sum_{j=q+1}^{s} (1+6\varepsilon)(i_{j}-i_{j-1})u^*_j. 
\end{align}

By combining \eqref{eq:msM PTAS 0}, \eqref{eq:msM PTAS 1}, \eqref{eq:msM PTAS 2}, \eqref{eq:msM PTAS 3}, together with $\varepsilon < 1/2$, we have
\begin{conference}
\begin{align*}
\ALG &\leq (1+\varepsilon)\left( (1+6\varepsilon) \sum_{j=1}^s (i_{j}-i_{j-1})u^*_j + 2\varepsilon(1+\varepsilon) \OPT \right) +\varepsilon\cdot \OPT \\
&\leq (1+\varepsilon)\left( (1+6\varepsilon) + 2\varepsilon(1+\varepsilon) \right) \cdot \OPT +\varepsilon\cdot \OPT 
=(1+10\varepsilon + 10\varepsilon^2 + 2\varepsilon^3)\OPT \\
&< (1+10\varepsilon + 5\varepsilon + 0.5\varepsilon)\OPT
=(1+15.5\varepsilon)\OPT. & \qedhere
\end{align*}
\end{conference}%
\begin{fullpaper}
\begin{align*}
\ALG &\leq (1+\varepsilon)\left( (1+6\varepsilon) \sum_{j=1}^s (i_{j}-i_{j-1})u^*_j + 2\varepsilon(1+\varepsilon) \OPT \right) +\varepsilon\cdot \OPT \\
&\leq (1+\varepsilon)\left( (1+6\varepsilon) + 2\varepsilon(1+\varepsilon) \right) \cdot \OPT +\varepsilon\cdot \OPT \\
&=(1+10\varepsilon + 10\varepsilon^2 + 2\varepsilon^3)\OPT \\
&< (1+10\varepsilon + 5\varepsilon + 0.5\varepsilon)\OPT
=(1+15.5\varepsilon)\OPT. & \qedhere
%\leq (1+22\varepsilon)\OPT.
\end{align*}
\end{fullpaper}%
\end{proof}\let\qed\relax
\begin{fullpaper}
Claims \ref{claim:PTAS-time} and \ref{claim:PTAS-approx} imply that Algorithm \ref{alg:msM} is a PTAS.
\end{fullpaper}
\end{proof}

\subsection{Hardness of the general case}
We show a stronger result than the NP-hardness of the minimum $(\sum,\max)$-value matroid partitioning problem by reducing the \emph{set cover} problem. 
Given a set $V=[n]$ and a collection $\mathcal{S} = \{ S_i \subseteq V \mid i\in [k] \}$, the set cover problem is to find a subset $\mathcal{S}'~(\subseteq\mathcal{S})$ of minimum cardinality
such that $\mathcal{S}'$ covers $V$, i.e., $\bigcup_{S\in\mathcal{S}'}S=V$.
It is known that the set cover problem cannot be approximated in polynomial time to within a factor of $o(\log k)$ unless P=NP~\cite{DinurSteurer2014,Moshkovitz2015}. 
%% The set cover problem cannot be approximated in polynomial time to within a factor of $(1-o(1))\ln k$,
%% unless NP is contained in $\texttt{DTIME}(n^{O(\log\log n)})$ 
%% \cite{LundYannakakis1994,Feige1998}.
%unweighted set coverから，gap-preservingな帰着ができる．
%近似が$\log \log k$より良く出来ないことの証明に使える．
%\TODO{check inapproximability of $\Omega(\log k)$}
%証明をみると$|\mathcal{S}|$と$|V|$はともに$n^{O(\log\log n)}$となっているので大丈夫 (Vaziraniの近似アルゴリズムに証明の概略あり)
%[Moshkovitz2015]を見ると入力サイズNについてlog Nバウンドを示しているから大丈夫みたい
\begin{theorem}
Even if either matroids or weights, but not both, are identical,
the minimum $(\sum,\max)$-value matroid partitioning problem cannot be approximated in polynomial time
within a factor of $o(\log k)$, unless P=NP.
%NP is contained in $\texttt{DTIME}(n^{O(\log\log n)})$.
\end{theorem}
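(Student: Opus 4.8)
The plan is to reduce from the set cover problem, whose $o(\log k)$-inapproximability is quoted above. The statement asserts hardness both when the matroids are identical (and the weights differ) and when the weights are identical (and the matroids differ), so I would give two reductions, one for each case. In both the construction will be \emph{exact}: the minimum $(\sum,\max)$-value of the produced instance will equal the minimum cardinality of a set cover. Since the number of matroids equals the number $k=|\mathcal{S}|$ of sets, any $\alpha(k)$-approximation for the partitioning problem would yield an $\alpha(k)$-approximation for set cover, and the bound follows; we may assume $\bigcup_i S_i=V$, so that a feasible partition (equivalently, a cover) exists.

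For the case of \emph{identical weights}, I would take as ground set $E=\{e_v\mid v\in V\}\cup\{d_i\mid i\in[k]\}$, where the $e_v$ encode the universe and each $d_i$ is a private dummy. For each $i\in[k]$ put $A_i=\{e_v\mid v\in S_i\}\cup\{d_i\}$ and let $\mathcal{I}_i=2^{A_i}$, the free matroid on $A_i$ with every other element a loop. Use the single weight function with $w(e_v)=1$ and $w(d_i)=0$, so the weights are identical across the $k$ matroids while the matroids differ. Because $d_i$ is a loop in every $\mathcal{I}_j$ with $j\ne i$, it must be placed in $I_i$, forcing each part to be nonempty at zero cost. For any feasible partition the index set $C=\{i\mid I_i\text{ contains some }e_v\}$ is a cover and $\sum_i\max_{e\in I_i}w(e)=|C|$; conversely a cover of size $c$ yields a feasible partition of value $c$. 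Hence the optimum equals the minimum cover size.

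For the case of \emph{identical matroids}, I would keep the same ground set and take the free matroid $\mathcal{I}=2^{E}$ for every part, so the matroids coincide, and instead encode the sets through the weights. Set $w_i(e_v)=1$ if $v\in S_i$ and $w_i(e_v)=M$ otherwise, with $M=k+1$, and $w_i(d_j)=0$ for all $j$. Any partition placing some $e_v$ in a part $i$ with $v\notin S_i$ has value at least $M>k$, whereas a valid cover of size $c\le k$ gives value $c$: route $d_i$ to part $i$ for nonemptiness and each $e_v$ to a chosen part with $v\in S_i$. Thus an optimal partition never pays the penalty, its value equals the number of parts carrying a real element, and again the minimum $(\sum,\max)$-value equals the minimum cover size.

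Because each reduction makes the optimum equal to the minimum cover size, the correspondence is automatically approximation-factor preserving, and the claimed $o(\log k)$ lower bound is inherited from set cover. The delicate points I would check are (i) that the empty-set requirement $I_i\ne\emptyset$ is met at zero additional cost, handled by the private dummies $d_i$ (loops in the first case, weight-$0$ elements in the second), and (ii) that in each case exactly one of the two data is genuinely identical across $i$ while the other carries all the combinatorial information. I expect this bookkeeping, rather than any deep argument, to be the only real work.
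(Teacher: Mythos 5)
Your proposal is correct and follows essentially the same route as the paper: a value-preserving reduction from set cover with two constructions, using dummy elements to satisfy nonemptiness at zero cost, large penalty weights to encode set membership when the matroids are identical, and restricted ground sets to encode it when the weights are identical. The only differences are cosmetic (private dummies and free matroids versus the paper's shared dummy pool and rank-$n$ truncations, and penalty $k+1$ versus $k^2$), and your remark that a returned partition of value greater than $k$ can be replaced by the trivial cover handles the one point where ``automatically approximation-preserving'' needs a word.
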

\begin{proof}
Let $(V, \mathcal{S})$ be an instance of the set cover problem. 
We first construct an instance of the minimum $(\sum,\max)$-value matroid partitioning problem with identical matroids and different weights. 
Define a set $D$ of $(k-1)n$ dummy elements. 
Let $E=V \cup D$ be the ground set and let $\mathcal{I}=\{I\subseteq E\mid |I|\le n\}$.
For each element $e \in E$ and $i\in [k]$, the weight $w_i(e)$ is defined by
$0$ if $e\in D$, $1$ if $e\in S_i$, and a sufficiently large number if $e\in V\setminus S_i$ (for example, $k^2$).
Let us consider an instance $(E,(\mathcal{I},w_i)_{i\in [k]})$ of the minimum $(\sum,\max)$-value matroid partitioning problem,
which is the identical matroids case.

To show the theorem for the identical matroid case, it is sufficient to prove that there exists a set cover $\mathcal{S}^*~(\subseteq \mathcal{S})$ of size at most $t$ if and only if there exists a feasible partition $(I_1, \ldots, I_k)$ of $E$ such that $\sum_{i\in [k]} \max_{e \in I_i} w_i(e)\leq t$. 

First, let $\mathcal{S}^*$ be a set cover of size $t$. 
For notational convenience, we denote $\mathcal{S}^* = \{S_1, \ldots, S_t\}$. 
By the definition of $S_i$'s, there exists some partition $(V_1, \ldots, V_t)$ of $V$ such that $V_i \subseteq S_i \ (i\in [t])$. 
Some $V_i$'s may be empty. 
%This is constructed by removing an element $e$ from $S_i$ if $e \in S_i \cap S_j$ and $|S_i| \geq |S_j|$. 
Then, let $(D_1, \ldots, D_k)$ be an arbitrary partition of $D$ such that $|D_i| = n - |V_i|$ for $i=1, \ldots, t$ and $|D_i| = n$ for $i=t+1,\dots,k$. 
We construct a partition $(I_1, \ldots, I_k)$ of $E$ defined by $I_i = V_i \cup D_i$ if $i \leq t$ and $I_i = D_i$ if $i > t$. 
Since $|I_i| = n$ for all $i$, this partition is feasible. 
At most $t$ sets in $I_i$'s contain an element of $V$, and hence it holds that $\sum_{i\in[k]} \max_{e \in I_i} w_i(e)\leq t$. 

Conversely, let $(I_1, \ldots, I_k)$ be a partition of $E$ such that $\sum_{i\in[k]} \max_{e \in I_i} w_i(e)= t~(\le k)$. 
Here, the value of $\max_{e \in I_i} w_i(e)$ is $1$ if $I_i$ contains an element of $V$ and otherwise the value is $0$.
Since every element of $V$ is contained in some $I_i$, $t$ sets among $I_1, \ldots, I_k$ contain an element of $V$. 
We denote such sets by $I_1, \ldots, I_t$ by rearranging the indices. 
Since $\bigcup_{i\in [t]} I_i \setminus D = V$ and $I_i \setminus D \subseteq S_i \ (i\in[k])$,
we have $\bigcup_{i\in [t]} S_i \supseteq V$. 
Thus, $( S_1, \ldots, S_t )$ is a set cover of size $t$. 
This proves the theorem for identical matroid case. 

For the identical weights case, 
we construct an instance $(E,(\mathcal{I}_i,w)_{i\in [k]})$ of the minimum $(\sum,\max)$-value matroid partition problem by setting 
\begin{conference}%
$\mathcal{I}_i = \{ I\subseteq S_i \cup D \mid |I| \leq n\}$,
$w(e)=0$ for $e\in D$, and $w(e)=1$ for $e\in V$.
\end{conference}
\begin{fullpaper}
\begin{align*}
  \mathcal{I}_i = \{ I\subseteq S_i \cup D \mid |I| \leq n\}
  \quad\text{and}\quad
  w(e)=\begin{cases}
  0&(e\in D),\\
  1&(e\in V).
  \end{cases}  
\end{align*}
\end{fullpaper}
Then the statement holds by a similar proof. 
\end{proof}

\subsection{Algorithm for the general case}\label{sec:msM-alg-general}
In this subsection, we provide an $\varepsilon k$-approximation algorithm for any $\varepsilon>0$.
Let $(E,(\mathcal{I}_i,w_i)_{i\in[k]})$ be an instance of the minimum $(\sum,\max)$-value matroid partitioning problem, and let $(I_1^*,\dots,I_k^*)$ be any optimal partition. 

Similarly to the PTAS described in Section \ref{sec:msM-PTAS}, our algorithm guesses $\max_{e\in I^*_{i}}w_i(e)$ for each $i\in [k]$.
In order to reduce the number of possibilities,
we only guess top-$\lceil 1/\varepsilon\rceil$ weights of $\max_{e \in I_i^*} w_i(e)$.
For simplicity, let $r=\lceil 1/\varepsilon\rceil$.
Let $J^*=\{i_1,\dots,i_r\}$ be the indices of top-$r$ weights, i.e.,
$\max_{e\in I_{i}^*}w_i(e)\ge \max_{e\in I_{j}^*}w_i(e)$ for any $i\in J^*$ and $j\in [k]\setminus J^*$.
Let $u_i^*=\max_{e\in I_{i}^*}w_i(e)$ for each $i\in J^*$.
Then it finds a feasible partition $(I_1,\dots,I_k)$ that satisfies
$\max_{e\in I_i} w_i(e) \le u_i^*$ for $i\in J^*$ and
$\max_{e\in I_i} w_i(e) \le \min_{j\in J^*}u_j^*$ for $i\in [k]\setminus J^*$.

\begin{conference}
\begin{theorem}\label{thm:summax-approx}
  For any positive fixed number $\varepsilon > 0$, there exists a polynomial-time $\varepsilon k$-approximation algorithm for the minimum $(\sum,\max)$-value matroid partitioning problem.
\end{theorem}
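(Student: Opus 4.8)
The plan is to establish the two guarantees for the algorithm specified just above: polynomial running time for fixed $\varepsilon$, and that the returned partition has $(\sum,\max)$-value at most $\varepsilon k$ times the optimum. Throughout I write $r=\lceil 1/\varepsilon\rceil$ (a constant) and $\tau=\min_{j\in J^*}u_j^*$, and I first treat the main regime $k\ge r$. For the running time I would bound the number of guesses: the algorithm ranges over the index set $J^*$ and the values $u_i^*=\max_{e\in I_i^*}w_i(e)$ for $i\in J^*$, so there are $\binom{k}{r}$ choices of $J^*$ and, since each $u_i^*$ is one of the at most $n$ weights $w_i(e)$, at most $\binom{k}{r}\,n^{r}=O(k^{r}n^{r})$ guesses in total, which is polynomial for fixed $r$. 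For each guess I restrict $\mathcal{I}_i$ to $\{e\mid w_i(e)\le u_i^*\}$ for $i\in J^*$ and to $\{e\mid w_i(e)\le\tau\}$ for $i\notin J^*$, then call the polynomial-time routine of Theorem~\ref{thm:mss}. A key observation is that any feasible partition of these restricted matroids automatically obeys the weight caps, so its $(\sum,\max)$-value is controlled without the routine needing to optimize it. Correctness of the enumeration follows because, for the guess coinciding with the optimum, $(I_1^*,\dots,I_k^*)$ is itself feasible for the restricted matroids: for $i\in J^*$ the equality $\max_{e\in I_i^*}w_i(e)=u_i^*$ holds, while for $i\notin J^*$ the top-$r$ property gives $\max_{e\in I_i^*}w_i(e)\le\tau$; hence the routine succeeds on this guess and returns some feasible $(I_1,\dots,I_k)$.

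The crux is the approximation bound. For the partition returned on the correct guess, $\max_{e\in I_i}w_i(e)\le u_i^*$ for $i\in J^*$ and $\le\tau$ for the remaining $k-r$ indices, so
\[
\ALG\le\sum_{i\in J^*}u_i^*+(k-r)\tau .
\]
Now $\sum_{i\in J^*}u_i^*$ is a partial sum of $\OPT=\sum_{i\in[k]}\max_{e\in I_i^*}w_i(e)$, hence at most $\OPT$; and since each of the $r$ terms $u_i^*\ (i\in J^*)$ is at least $\tau$, we get $r\tau\le\sum_{i\in J^*}u_i^*\le\OPT$, i.e.\ $\tau\le\OPT/r$. Substituting,
\[
\ALG\le\OPT+(k-r)\frac{\OPT}{r}=\frac{k}{r}\,\OPT\le\varepsilon k\,\OPT ,
\]
where the last step uses $r\ge 1/\varepsilon$. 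The telescoping collapse to the clean factor $k/r$ is exactly what makes guessing only the top $r$ values suffice, and the inequality $\tau\le\OPT/r$ is the heart of the argument: it is precisely what keeps the aggregated cost $(k-r)\tau$ of the unguessed parts under control. Since the algorithm returns the best partition over all guesses, $\ALG$ is bounded by the value on this particular guess, giving the claimed ratio.

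Finally I would dispose of the remaining regime $k<r$. There $J^*=[k]$, the algorithm simply guesses every $\max_{e\in I_i^*}w_i(e)$ in time $n^{k}\le n^{r}$, and any feasible partition of the resulting restricted matroids attains $\sum_{i\in[k]}u_i^*=\OPT$, so the instance is solved exactly; this is at least as good as $\varepsilon k\,\OPT$ whenever that quantity exceeds $\OPT$. I do not expect any genuine obstacle beyond the telescoping identity and the bound $\tau\le\OPT/r$ above; the remaining steps—counting the guesses, invoking Theorem~\ref{thm:mss}, and verifying that $(I_1^*,\dots,I_k^*)$ witnesses feasibility of the correct guess—are routine.
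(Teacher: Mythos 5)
Your proposal is correct and follows essentially the same route as the paper's own proof: the same enumeration over the top-$r$ index set $J^*$ and the $n^r$ candidate threshold values, the same use of Theorem~\ref{thm:mss} on the restricted matroids, and the identical chain $\ALG\le\sum_{i\in J^*}u_i^*+(k-r)\min_{j\in J^*}u_j^*\le\OPT+\frac{k-r}{r}\OPT=\frac{k}{r}\OPT\le\varepsilon k\cdot\OPT$, with the key bound $r\cdot\min_{j\in J^*}u_j^*\le\OPT$. Your explicit handling of the degenerate regime $k<r$ is a small addition the paper omits, but it does not change the argument.
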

The proof can be found in Appendix.
\end{conference}

\begin{fullpaper}
The algorithm is summarized in Algorithm \ref{alg:msM_general}. 
\begin{algorithm}
  \caption{$\varepsilon k$-approximation for the $(\sum,\max)$ problem}\label{alg:msM_general}
  \ForEach{$J\subseteq [k]$ such that $|J|=r$}{
    \ForEach{$u_i\in\{w_i(e)\mid e\in E\} $ $(i\in J)$}{
      find a partition $(I_1,\dots,I_k)$ such that
      $I_i\in (\mathcal{I}|\{e\mid w_i(e)\le u_i\})$ for $i\in J$ and
      $I_i\in (\mathcal{I}|\{e\mid w_i(e)\le \min_{j\in J}u_j\})$ for $i\in [k]\setminus J$       
      if such a partition exists\;
    }
  }
  \Return the best solution $(I_1, \ldots, I_k)$ among the obtained partitions\;
\end{algorithm}

\begin{theorem}
For any positive fixed number $\varepsilon > 0$, Algorithm \ref{alg:msM_general} is a polynomial-time $\varepsilon k$-approximation algorithm for the minimum $(\sum,\max)$-value matroid partitioning problem.
\end{theorem}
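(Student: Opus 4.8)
The plan is to verify two properties of Algorithm~\ref{alg:msM_general}: that it runs in polynomial time, and that the partition it returns has $(\sum,\max)$-value at most $\varepsilon k\cdot\OPT$. For the running time I would simply count the iterations. With $\varepsilon$ fixed, $r=\lceil 1/\varepsilon\rceil$ is a constant, so the outer loop over $r$-subsets $J\subseteq[k]$ runs $\binom{k}{r}=O(k^r)$ times, and the inner loop over guesses $(u_i)_{i\in J}\in\prod_{i\in J}\{w_i(e)\mid e\in E\}$ runs at most $n^r$ times. For each guess, finding a partition subject to the stated membership constraints is exactly a feasibility question for the restricted matroids $\mathcal{I}_i|\{e\mid w_i(e)\le u_i\}$ (and $\mathcal{I}_i|\{e\mid w_i(e)\le\min_{j\in J}u_j\}$ for $i\notin J$), which is solvable in polynomial time by Theorem~\ref{thm:mss}. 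Hence the whole algorithm is polynomial for fixed $\varepsilon$.

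The core of the argument is to show that one of the enumerated guesses is good. Fix an optimal partition $(I_1^*,\dots,I_k^*)$, write $\mu_i=\max_{e\in I_i^*}w_i(e)$, let $J^*$ be a set of $r$ indices carrying the $r$ largest values among $\mu_1,\dots,\mu_k$, and set $v=\min_{j\in J^*}\mu_j$, the $r$-th largest maximum. Since each $\mu_i$ equals $w_i(e)$ for some $e\in I_i^*\subseteq E$, the guess $J=J^*$ with $u_i=\mu_i$ for $i\in J^*$ is among those enumerated. I would then check that this guess admits a feasible partition by exhibiting $(I_1^*,\dots,I_k^*)$ itself as a witness: for $i\in J^*$ we have $I_i^*\in\mathcal{I}_i|\{e\mid w_i(e)\le\mu_i\}$ by definition of $\mu_i$, while for $i\in[k]\setminus J^*$ we have $\mu_i\le v$ because $J^*$ collects the top-$r$ values, so $I_i^*\in\mathcal{I}_i|\{e\mid w_i(e)\le v\}$. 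Thus the partition-finding step succeeds and returns some $(I_1',\dots,I_k')$ for this guess.

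It then remains to bound the value of $(I_1',\dots,I_k')$, which the final step can only improve upon. By the membership constraints, $\max_{e\in I_i'}w_i(e)\le\mu_i$ for $i\in J^*$ and $\max_{e\in I_i'}w_i(e)\le v$ for the $k-r$ indices $i\notin J^*$, so the $(\sum,\max)$-value is at most $\sum_{i\in J^*}\mu_i+(k-r)v$. Now $\sum_{i\in J^*}\mu_i\le\sum_{i\in[k]}\mu_i=\OPT$ by nonnegativity of the weights, and since $v$ is the smallest of the $r$ values $\{\mu_j\}_{j\in J^*}$ we have $rv\le\sum_{i\in J^*}\mu_i\le\OPT$, hence $v\le\OPT/r$. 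The telescoping estimate $\OPT+(k-r)\cdot\OPT/r=(k/r)\cdot\OPT\le\varepsilon k\cdot\OPT$, using $r\ge 1/\varepsilon$, then gives $\ALG\le\varepsilon k\cdot\OPT$.

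I expect no serious obstacle; the one point that needs care is the interplay between $r$ and $k$. Since $r=\lceil 1/\varepsilon\rceil\le k$ holds exactly when $k\ge 1/\varepsilon$, the outer loop is nonempty and the analysis applies precisely in the regime where the target factor $\varepsilon k$ is at least $1$ and hence meaningful; for $k<1/\varepsilon$ the requested bound $\varepsilon k<1$ is vacuous (indeed $\ALG\ge\OPT$ always). Conceptually, the crux is the telescoping decomposition $\OPT+(k-r)\cdot\OPT/r=(k/r)\cdot\OPT$: guessing the $r$ largest coordinates exactly lets them absorb the cost of the remaining $k-r$ coordinates, each of which is only coarsely bounded above by the $r$-th largest optimal maximum $v\le\OPT/r$.
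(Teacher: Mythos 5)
Your proposal is correct and follows essentially the same route as the paper: enumerate the top-$r$ maxima with $r=\lceil 1/\varepsilon\rceil$, observe that the optimal partition witnesses feasibility for the guess $J=J^*$, $u_i=\mu_i$, and bound the output by $\sum_{i\in J^*}\mu_i+(k-r)v\le\OPT+\frac{k-r}{r}\OPT=\frac{k}{r}\OPT\le\varepsilon k\cdot\OPT$. Your explicit verification that $(I_1^*,\dots,I_k^*)$ itself satisfies the restricted-matroid constraints, and your remark about the regime $k<1/\varepsilon$, are details the paper leaves implicit but do not change the argument.
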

\begin{proof}
Let $(I_1^*,\dots,I_k^*)$ be an optimal solution to the problem
and $(I_1,\dots,I_k)$ be the output of Algorithm \ref{alg:msM_general}.

We first analyze the running time of Algorithm \ref{alg:msM_general}. 
The number of possibility of $J$ is $\binom{k}{r}~(\le k^r)$. 
For each $J$, the number of possible combination of values $u_i$ for $i\in J$ is $|E|^r$.
Hence, the algorithm checks the feasibility of the matroid partitioning problem at most $(k|E|)^{r}=(k|E|)^{2\lceil 1/\varepsilon\rceil}$ times
and this is a polynomial with respect to $k$ and $|E|$ for fixed $\varepsilon$.
Thus, the algorithm runs in polynomial time.

Next, we show the approximation ratio of the algorithm.
The optimum value $\OPT$ is at least
\begin{align*}
  \OPT=\sum_{i\in[k]}\max_{e\in I_i^*}w_i(e)
  \ge \sum_{i\in J^*}\max_{e\in I_i^*}w_i(e)
  =\sum_{i\in J^*}u_i^*
  \ge r\cdot \left(\min_{j\in J^*}u_j^* \right).
\end{align*}
Let $(I_1',\dots,I_k')$ is a feasible partition of $E$ obtained at line 3 in Algorithm \ref{alg:msM_general}
using $J=J^*$ and $u_i=u_i^*$ for each $i\in J$.
Then, the objective value $\ALG$ of $(I_1,\dots,I_k)$ is at most
\begin{align*}
  \ALG
  &=\sum_{i\in[k]}\max_{e\in I_i}w_i(e)
  \le\sum_{i\in[k]}\max_{e\in I_i'}w_i(e)
  \le\sum_{i\in J^*}u_{i}^*+(k-r)\left(\min_{j\in J^*}u_j^*\right)\\
  &\le \OPT+\frac{k-r}{r}\OPT=\frac{k}{r}\OPT\le \varepsilon k\cdot\OPT.
\end{align*}
Thus, it is an $\varepsilon k$-approximation algorithm.
\end{proof}
\end{fullpaper}

%\subsubsection*{low rank cases}
%\begin{itemize}
%\item rank 2
%\begin{itemize}
%\item polynomial time solvable for any matroid (by calculating minimum weight perfect matching)
%\item 高ランクへの拡張に向けて：局所最適解が良くない例あり
%\item 行：台集合，列：基 とした行列に良い構造はあるか？
%\item \TODO{マトロイド由来であることを利用した高速なマッチングアルゴリズム：二部グラフでは${\rm O}(n \log n)$}
%\end{itemize}
%\item \TODO{rank 3}
%\end{itemize}

\section{Polynomial-time solvable optimal matroid partitioning problems}\label{sec:tractable}
In this section, we provide algorithms for the cases (1) $(\outerop, \innerop) = (\min,\min)$, $(\max,\max)$, $(\min,\max)$ or $(\min,\sum)$; (2) $(\outerop, \innerop) = (\max,\min)$ or $(\sum,\min)$ with identical matroids. 

We first deal with the first case. 
For the $(\min,\min)$, $(\max,\max)$, $(\min,\max)$ and $(\min,\sum)$ problems, we show polynomial-time reductions to the matroid partitioning problem. 
Then we can see that these are polynomially solvable by Theorem \ref{thm:mss}.
\begin{theorem}\label{thm:mmm}
The minimum $(\min,\min)$-value matroid partitioning problem is solvable in polynomial time.
\end{theorem}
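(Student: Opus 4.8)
The plan is to find a single element whose weight realizes the minimum possible $(\min,\min)$-value and then certify that this choice can be extended to a feasible partition. Recall that the $(\min,\min)$-value of a feasible partition $(I_1,\dots,I_k)$ is $\min_{i\in[k]}\min_{e\in I_i}w_i(e)$, which is simply the smallest weight $w_i(e)$ appearing over all pairs $(i,e)$ with $e\in I_i$. Since every element of $E$ lies in exactly one part, and we want this overall minimum to be \emph{large} (we are minimizing, but the objective is already a minimum, so the optimal value is the largest threshold that is still achievable)—actually the key observation is the reverse: because $(I_1,\dots,I_k)$ must cover all of $E$, the value $\min_{i}\min_{e\in I_i}w_i(e)$ is controlled by which part each element is assigned to. The task is therefore to choose the assignment so that the single cheapest realized weight is as small as possible.

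First I would enumerate all candidate values. There are at most $nk$ distinct numbers of the form $w_i(e)$ with $e\in E$ and $i\in[k]$, and the optimal $(\min,\min)$-value must equal one of them. For a fixed candidate pair $(i_0,e_0)$ achieving value $\theta=w_{i_0}(e_0)$, I want to test whether there is a feasible partition in which element $e_0$ is placed in part $i_0$ (so that $\theta$ is actually realized) and no assigned weight drops below $\theta$ is \emph{required}—but since we are minimizing the minimum, the only real requirement is that $e_0\in I_{i_0}$ and that the partition is feasible; the objective is then at most $\theta$. So the clean formulation is: for each pair $(i_0,e_0)$ with $e_0\in E$, test feasibility of a matroid partition in which $e_0$ is forced into $I_{i_0}$, and among all feasible such forced instances, return the one minimizing $w_{i_0}(e_0)$.

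The key step is reducing the forced-assignment feasibility test to the plain matroid partitioning feasibility problem of Theorem~\ref{thm:mss}. To force $e_0\in I_{i_0}$, I would contract $e_0$ in matroid $\mathcal{I}_{i_0}$ and delete $e_0$ from every other matroid $\mathcal{I}_j$ ($j\ne i_0$), then ask whether $E\setminus\{e_0\}$ admits a feasible partition into the matroids $(\mathcal{I}_{i_0}/\{e_0\},\ (\mathcal{I}_j\setminus\{e_0\})_{j\ne i_0})$. As recalled in the Preliminaries, contraction and deletion of matroids are again matroids and their oracles are computable from the original oracle, so Theorem~\ref{thm:mss} applies and runs in polynomial time. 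A feasible partition of the reduced instance, together with $e_0$ reinstated in part $i_0$, yields a feasible partition of $E$ with $e_0\in I_{i_0}$, and conversely. Thus the forced-assignment test is polynomial, and iterating over the $nk$ pairs keeps the total running time polynomial.

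The main obstacle I anticipate is verifying that the objective is \emph{exactly} $\theta$ rather than merely at most $\theta$: when we force $e_0$ into $I_{i_0}$, some other assigned pair could have weight strictly smaller than $\theta$, making the realized $(\min,\min)$-value less than $\theta$. However, this is not actually a problem for the minimization goal—a smaller realized value is even better—so it suffices to take, over all pairs $(i_0,e_0)$ for which the forced instance is feasible, the partition whose guaranteed upper bound $\theta=w_{i_0}(e_0)$ is smallest; the returned partition then has $(\min,\min)$-value at most the optimum, and optimality follows because the true optimal partition realizes its value at some specific pair $(i_0,e_0)$ that our enumeration also tries. I would close the argument by noting that the smallest feasible $\theta$ over all forced tests equals the optimum, since any feasible partition attains its minimum at \emph{some} element-in-part pair that appears among our candidates.
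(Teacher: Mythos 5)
Your proposal is correct and follows essentially the same route as the paper: enumerate the $k\cdot|E|$ candidate pairs $(i_0,e_0)$, force $e_0$ into part $i_0$ via contraction in $\mathcal{I}_{i_0}$ (and deletion elsewhere), test feasibility with Theorem~\ref{thm:mss}, and return the feasible pair with smallest weight. Your two-sided argument that the smallest feasible $\theta$ equals the optimum is exactly the paper's argument, so there is nothing to add.
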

\begin{fullpaper}
\begin{proof}
%We first consider the $(\min,\min)$ case.
Let $(E,(\mathcal{I}_i,w_i)_{i\in [k]})$ be any problem instance. 
We denote by $(I_1^*,\dots,I_k^*)$ an optimal feasible partition of the problem. 
Define \((i^*,e^*)\in \argmin_{(i,e)\in[k]\times I_i^*}w_i(e)\).
Note that the $(\min,\min)$-value of $(I_1^*,\dots,I_k^*)$ is $w_{i^*}(e^*)$.
Let $(i',e')\in[k]\times E$ be a pair that attains the minimum of $w_i(e)$ among $(i,e)\in[k]\times E$ such that
\begin{align}
  E\setminus\{e'\}\text{ has a feasible partition for }
  (E \setminus \{e'\}, \mathcal{I}_i)_{i \neq i'} \ \text{and} \ (E \setminus \{e'\}, (\mathcal{I}_{i'}/\{e'\})).
  \label{eq:mmm_condition}
\end{align}
The number of possibility of $(i,e)$ is $k\cdot |E|$. 
For each $(i,e)$, the condition \eqref{eq:mmm_condition} can be checked in polynomial time by Theorem \ref{thm:mss}.
Thus we can find $(i',e')$ in polynomial time. 
Then, the optimal value \(w_{i^*}(e^*)\) is at most \(w_{i'}(e')\), because there exists a feasible partition $(I_1',\dots,I_k')$
such that $e'\in I_{i'}'$.
On the other hand, the optimal value \(w_{i^*}(e^*)\) is at least \(w_{i'}(e')\) because
\(E\setminus\{e^*\}\in \mathcal{I}_1\vee\dots\vee\mathcal{I}_{i^*-1}\vee(\mathcal{I}_{i^*}/\{e^*\})\vee\mathcal{I}_{i^*+1}\vee\dots\vee\mathcal{I}_{k}\).
Thus, $w_{i'}(e')$ is the optimal value of the problem, and hence we can compute the optimal value (and partition) in polynomial time.
\end{proof}

We also show the polynomial-time solvability of the minimum $(\max,\max)$, and $(\min,\max)$-value matroid partitioning problems in a similar way. 
\end{fullpaper}
\begin{conference}
We also show the polynomial-time solvability of the minimum $(\max,\max)$, and $(\min,\max)$-value matroid partitioning problems in a similar way; see Theorem \ref{thm:max-max,min-max} in Appendix.  
\end{conference}

\begin{theorem}\label{thm:max-max,min-max}
The minimum $(\max,\max)$ and $(\min,\max)$-value
matroid partitioning problems $(E,(\mathcal{I}_i,w_i)_{i\in [k]})$ are solvable in polynomial time.
\end{theorem}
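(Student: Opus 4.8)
The plan is to mimic the structure of the proof of Theorem~\ref{thm:mmm} for the $(\min,\min)$ case, reducing each problem to a polynomial number of feasibility queries that can be answered via Theorem~\ref{thm:mss}. The key observation in both cases is that the objective value of any feasible partition is determined by a small set of ``critical'' elements, and we can search over the polynomially many candidate values these critical elements can take.

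For the $(\min,\max)$ case, the objective is $\min_{i\in[k]}\max_{e\in I_i}w_i(e)$, so the value of any feasible partition equals $w_{i'}(e')$ for some index $i'$ and some element $e'\in I_{i'}$ that is the heaviest element in its part while that part achieves the minimum over parts. The plan is to guess the pair $(i',e')$ that attains the optimum together with the threshold it imposes: if part $i'$ has maximum weight $w_{i'}(e')$, then $I_{i'}$ must be a nonempty independent set contained in $\{e\mid w_{i'}(e)\le w_{i'}(e')\}$ containing some element of weight exactly $w_{i'}(e')$, while the \emph{other} parts $I_i$ ($i\ne i'$) are unconstrained except that they must lie in $\mathcal{I}_i$. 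Concretely, for each candidate value $\theta=w_{i'}(e')$ and each index $i'$, I would contract or restrict appropriately and check, via Theorem~\ref{thm:mss}, whether $E$ admits a feasible partition in which part $i'$ is drawn from $(\mathcal{I}_{i'}|\{e\mid w_{i'}(e)\le\theta\})$ and contains a designated element of weight $\theta$, with the remaining parts in their full matroids. Taking the minimum $\theta$ over all $(i',\theta)$ for which such a partition exists yields the optimum, since any feasible partition of value at most $\theta$ certifies feasibility and conversely the optimal partition certifies that its own value is achievable.

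For the $(\max,\max)$ case, the objective is $\max_{i\in[k]}\max_{e\in I_i}w_i(e)$, i.e.\ the global maximum weight used anywhere in the partition. Here the natural approach is a single global threshold: the value is at most $\theta$ if and only if every part $I_i$ is contained in $\{e\mid w_i(e)\le\theta\}$, which is equivalent to the existence of a feasible partition for the restricted matroids $(E,\mathcal{I}_i|\{e\mid w_i(e)\le\theta\})_{i\in[k]}$ (noting that an element $e$ with $w_i(e)>\theta$ for a specific $i$ may still be placed in another part $j$ with $w_j(e)\le\theta$). Since the optimal value is one of the $O(kn)$ distinct weights $w_i(e)$, I would sort these candidate thresholds and, for each, invoke Theorem~\ref{thm:mss} to test feasibility of the restricted instance, returning the smallest feasible $\theta$. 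The number of queries is polynomial and each is polynomial-time, so the whole procedure runs in polynomial time.

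The main subtlety, and the step I expect to require the most care, is handling the per-part nature of the restriction when $\innerop=\max$: an element $e$ is only ``forbidden'' in part $i$ if $w_i(e)>\theta$, but the \emph{same} element may legitimately appear in a different part, so the restricted instance must use $\mathcal{I}_i|\{e\mid w_i(e)\le\theta\}$ with a \emph{per-$i$} ground set rather than a uniform deletion of heavy elements from $E$. For the $(\min,\max)$ case there is the additional bookkeeping of ensuring the distinguished part $i'$ actually \emph{attains} the claimed minimum value (not merely stays below it), which I would enforce by requiring part $i'$ to contain an element of weight exactly $\theta$ via a contraction argument analogous to the use of $\mathcal{I}_{i^*}/\{e^*\}$ in the proof of Theorem~\ref{thm:mmm}; the correctness argument then splits into showing the returned value is both an upper bound (from the exhibited feasible partition) and a lower bound (from restricting the optimal partition to the relevant thresholds).
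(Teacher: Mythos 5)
Your proposal is correct and follows essentially the same route as the paper: for $(\max,\max)$ a single global threshold $\theta$ with per-$i$ restricted matroids $\mathcal{I}_i|\{e\mid w_i(e)\le\theta\}$, and for $(\min,\max)$ a guessed index $i^*$ whose matroid alone is restricted, with feasibility tested via Theorem~\ref{thm:mss} over the $O(k|E|)$ candidate thresholds. The only difference is your extra requirement that part $i'$ \emph{attain} the value $\theta$ exactly (via contracting a designated element); this is harmless but unnecessary, since the paper simply characterizes ``optimal value $\le w$'' and takes the minimum feasible threshold, which already equals the optimum because the optimum is among the candidates.
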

\begin{fullpaper}
	\begin{proof}
		We prove the $(\max,\max)$ case.
		The key idea is that the optimal value is at most $w$ if and only if
		\begin{align}
		E\text{ has a feasible partition for }
		(\mathcal{I}_1|\{e:w_1(e)\le w\},\dots,\mathcal{I}_{k}|\{e:w_k(e)\le w\}).
		%% E\in \bigvee_{i\in [k]} (\mathcal{I}_i|\{e:w_i(e)\le w\}).
		\label{eq:mMM_condition}
		\end{align}
		Since the optimal value is in \(\{w_i(e)\mid i\in [k],~e\in E\}\),
		we can obtain the optimal value by setting $w$ for all the possibilities.
		As the condition \eqref{eq:mMM_condition} can be checked in polynomial time by Theorem \ref{thm:mss}
		and the number of the possibilities is at most $k\times |E|$,
		we can compute the optimal value (and partition) in polynomial time.
		
		Next, we see the $(\min,\max)$ case.
		In this case, the optimal value is at most $w$ if and only if
		there exists $i^*$ such that
		\begin{align}
		E\text{ has a feasible partition for }
		(\mathcal{I}_1,\dots,\mathcal{I}_{i^*-1},(\mathcal{I}_{i^*}|\{e:w_{i^*}(e)\le w\}),\mathcal{I}_{i^*+1},\dots,\mathcal{I}_{k}).
		\label{eq:mmM_condition}
		\end{align}
		Thus, we can find the optimal value by setting $i^*$ for all \([k]\)
		and $w$ for all \(\{w_{i^*}(e)\mid e\in E\}\).
	\end{proof}
\end{fullpaper}

\begin{theorem}\label{thm:mms}
The minimum $(\min,\sum)$-value matroid partitioning problem $(E,(\mathcal{I}_i,w_i)_{i\in [k]})$ is solvable in polynomial time.
\end{theorem}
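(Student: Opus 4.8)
The plan is to reduce the minimum $(\min,\sum)$-value matroid partitioning problem to the feasibility problem of matroid partitioning (Theorem~\ref{thm:mss}), in the same spirit as the $(\min,\min)$ and $(\min,\max)$ cases just treated. The structural observation driving this case is that the outer operator is $\min$, so the objective value equals $\min_{i\in[k]}\sum_{e\in I_i}w_i(e)$; the total weight in \emph{some single} part is small. Hence the optimal value is at most a threshold $w$ if and only if there exists an index $i^*\in[k]$ and a feasible partition $(I_1,\dots,I_k)$ with $\sum_{e\in I_{i^*}}w_{i^*}(e)\le w$, while the other parts are completely unconstrained (beyond lying in their respective $\mathcal{I}_i$). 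The difficulty, compared with the $\min$ inner operator, is that $\sum_{e\in I_{i^*}}w_{i^*}(e)\le w$ is not simply a matter of deleting heavy elements: it is a genuine weight constraint on one matroid, not expressible by a restriction $\mathcal{I}_{i^*}|\{e:w_{i^*}(e)\le w\}$.

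First I would fix attention on the distinguished index $i^*$ and ask: among all feasible partitions, what is the minimum possible value of $\sum_{e\in I_{i^*}}w_{i^*}(e)$? The key step is to recognize that, once $i^*$ is fixed, minimizing the weight of $I_{i^*}$ subject to the remaining parts forming a feasible partition of $E\setminus I_{i^*}$ is precisely a \emph{minimum-weight} covering/partitioning problem on matroid $\mathcal{I}_{i^*}$ against the residual matroids $(\mathcal{I}_j)_{j\ne i^*}$. Concretely, $(I_1,\dots,I_k)$ is feasible with $I_{i^*}$ the designated part exactly when $I_{i^*}\in\mathcal{I}_{i^*}$ and $E\setminus I_{i^*}$ admits a feasible partition for the matroids $(\mathcal{I}_j)_{j\ne i^*}$; the latter is equivalent, by the matroid union theorem underlying Theorem~\ref{thm:mss}, to $E\setminus I_{i^*}$ being independent in $\bigvee_{j\ne i^*}\mathcal{I}_j$. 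So I want to minimize $w_{i^*}(I_{i^*})=\sum_{e\in I_{i^*}}w_{i^*}(e)$ over sets $I_{i^*}$ that are independent in $\mathcal{I}_{i^*}$ and whose complement is independent in the union matroid $\mathcal{N}=\bigvee_{j\ne i^*}\mathcal{I}_j$.

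The clean way to carry this out is to rephrase it as choosing $E\setminus I_{i^*}$ to be a \emph{maximum-weight} common independent set of $\mathcal{N}$ and $\mathcal{I}_{i^*}/\!\!\!\,$-type structure; more precisely, I would set up a weighted matroid intersection instance on ground set $E$ between $\mathcal{N}$ and the matroid $\mathcal{I}_{i^*}$, where I seek a set $X=E\setminus I_{i^*}$ that is independent in $\mathcal{N}$ and whose complement $E\setminus X$ is independent in $\mathcal{I}_{i^*}$ (equivalently $X$ is spanning/co-independent appropriately). By the reduction of $(\sum,\sum)$ to weighted matroid intersection cited after Theorem~\ref{thm:mss}, minimizing $w_{i^*}(E\setminus X)$ is a weighted matroid intersection problem and is solvable in polynomial time. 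Then I compute, for each of the $k$ choices of $i^*$, the minimum achievable $\sum_{e\in I_{i^*}}w_{i^*}(e)$, and return the smallest over all $i^*$; by Theorem~\ref{thm:mss} (feasibility) I also verify a global feasible partition exists. Since there are only $k$ indices to try and each subproblem is a polynomial-time weighted matroid intersection (or $(\sum,\sum)$ partition) computation, the overall algorithm is polynomial.

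The main obstacle I expect is the transition from ``minimize the weight of one part'' to a bona fide weighted matroid intersection formulation: I must argue carefully that the constraint ``the other parts form a feasible partition of $E\setminus I_{i^*}$'' is exactly ``$E\setminus I_{i^*}\in\bigvee_{j\ne i^*}\mathcal{I}_j$'', which is where the matroid union characterization of Theorem~\ref{thm:mss} is used, and that minimizing $w_{i^*}(I_{i^*})$ over $I_{i^*}\in\mathcal{I}_{i^*}$ with this complement condition is correctly captured by a single weighted matroid intersection instance (taking care that the designated part must be nonempty per the feasible-partition definition, which I can enforce via the dummy-element device of Proposition~\ref{prop:basepartitioning}). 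Once this equivalence is established, the rest is a direct appeal to known polynomial-time solvability of weighted matroid intersection.
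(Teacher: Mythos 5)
Your proposal is correct and follows essentially the same route as the paper: enumerate the distinguished index $i^*$ achieving the outer $\min$, and for each choice solve a minimum-weight partition problem in which only the weight of part $i^*$ matters. The paper packages this step by zeroing out $w_j$ for $j\ne i^*$ and invoking the polynomial-time $(\sum,\sum)$ solver of Theorem~\ref{thm:mss}, whereas you merge the other matroids into $\bigvee_{j\ne i^*}\mathcal{I}_j$ and invoke weighted matroid intersection; these are the same reduction in different clothing.
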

\begin{fullpaper}
\begin{proof}
Let $(E,(\mathcal{I}_i,w_i)_{i\in [k]})$ be any problem instance. 
We denote by $(I_1^*,\dots,I_k^*)$ an optimal partition of the problem. 
%and \(i^*\in \argmin_{i\in[k]}\sum_{e\in I_i^*}w_i(e)\).
For each $j\in[k]$, let $(I_1^{j},\dots,I_k^{j})$ be an optimal solution for a minimum $(\sum,\sum)$-value of
the matroid partitioning problem instance $(E,(\mathcal{I}_i,w_i^{j})_{i\in [k]})$, 
where 
\begin{align*}
  w_i^{j}(e)=\begin{cases}
  w_i(e)&(i=j),\\
  0&(i\ne j).
  \end{cases}
\end{align*}
Then, we have
\begin{align*}
  \min_{i\in[k]}\sum_{e\in I_i^*} w_i(e)
  =\min_{j\in[k]}\sum_{i\in[k]}\sum_{e\in I_i^*} w_i^{j}(e)
  =\min_{j\in[k]}\sum_{i\in[k]}\sum_{e\in I_i^{j}} w_i^{j}(e).
\end{align*}
Thus, we can obtain an optimal solution by solving $(E,(\mathcal{I}_i,w_i^{j})_{i\in [k]})$ % minimum $(\sum,\sum)$-value matroid partitioning problems $(E,(\mathcal{I}_i,w_i^{j})_{i\in [k]})$
for all $j\in [k]$.
The running time is polynomial by using the polynomial-time algorithm in Theorem \ref{thm:mss}.
\end{proof}
\end{fullpaper}
\begin{conference}
The proofs of the theorems can be found in Appendix.
\end{conference}

Next we consider the $(\max,\min)$ case and the $(\sum,\min)$ case.
As we will see in the next section,
the optimal matroid partitioning problems for these cases are
(strongly) NP-hard even to approximate. % for any constant approximation factor
We provide polynomial-time algorithms for instances where matroids are identical (weights may differ). 
The following lemma plays the crucial role for this purpose.
\begin{lemma}\label{lem:identical_dist}
Let $(E,\mathcal{I})$ be a matroid.
If there is a partition \((I_1,\dots,I_k)\) of $E$ such that \(I_i\in\mathcal{I}\) for all \(i\in [k]\),
then for any $k$ elements \(e_1,\dots,e_k\in E\),
there is a partition \((I_1',\dots,I_k')\) of $E$ such that \(e_i\in I_i'\in\mathcal{I}\) for all \(i\in [k]\),
\end{lemma}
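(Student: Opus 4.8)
The plan is to prove this as a statement about distinct representatives in the common matroid $(E,\mathcal{I})$, reducing it to a feasibility fact about an augmented family of matroids. Suppose $(I_1,\dots,I_k)$ is a partition of $E$ with each $I_i\in\mathcal{I}$, and fix target elements $e_1,\dots,e_k\in E$ (these need not be distinct, and handling possible repetition is a point to watch). I want to produce a new partition $(I_1',\dots,I_k')$ with $e_i\in I_i'\in\mathcal{I}$ for all $i$. The natural framing is to ask for a partition of $E$ that is feasible for the $k$ matroids $(E,\mathcal{I}/\{e_i\})$ after committing $e_i$ to block $i$; equivalently, I want a feasible partition of $E\setminus\{e_1,\dots,e_k\}$ for the contracted matroids $(E,\mathcal{I}/\{e_i\})_{i\in[k]}$, provided the $e_i$ are distinct, and then set $I_i'=\{e_i\}\cup(\text{its part})$. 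By Theorem~\ref{thm:mss}, such a feasible partition exists if and only if a certain rank/covering condition holds, so it suffices to verify that condition.

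The cleaner route, which I would prefer, is a direct exchange argument that transforms $(I_1,\dots,I_k)$ into the desired partition by moving one target element at a time into its prescribed block while preserving the partition and independence. First I would set up the following one-step claim: given a partition $(J_1,\dots,J_k)$ of $E$ with each $J_i\in\mathcal{I}$, and given an element $e$ together with a target index $i$, I can rearrange the partition so that $e$ lands in block $i$, block $i$ stays independent, every block stays independent, and every target element already correctly placed stays in place. The mechanism is a standard augmenting/alternating argument for matroid base (or independent-set) packings: if $e$ is currently in block $J_j$ with $j\neq i$, I try to insert $e$ into $J_i$; by the exchange axiom (I3) either $J_i\cup\{e\}\in\mathcal{I}$, or there is an element of $J_i$ that can be pushed out and relocated, triggering an alternating chain through the blocks. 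Because all matroids are identical, elements are freely interchangeable between blocks subject only to the single independence constraint, which is exactly what makes the identical case tractable here and what drives the whole proof.

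Concretely I would induct on the number of target elements not yet correctly seated. Maintaining the invariant that the partition is feasible (each block in $\mathcal{I}$) and that the already-seated targets $e_1,\dots,e_{t-1}$ remain fixed in their blocks, I process $e_t$: I route it into block $t$ via an exchange chain that only reshuffles non-target, non-seated elements, so the invariant is preserved and one more target becomes seated. Iterating $k$ times yields the partition $(I_1',\dots,I_k')$ with $e_i\in I_i'\in\mathcal{I}$ for all $i$, which is the claim. The termination and correctness of each exchange step rest on the matroid exchange property quoted in the preliminaries for bases, applied after passing to base partitions (recall we may assume $|E|=\sum_i\rank(\mathcal{I})$ so that every block is a base of $\mathcal{I}$, via Proposition~\ref{prop:basepartitioning}); the strong base exchange axiom then guarantees the alternating chain closes up consistently.

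The main obstacle I anticipate is making the exchange chain respect the already-seated target elements, since a naive alternating path could try to evict a previously placed $e_{j}$ ($j<t$) from its block. The fix is to run the exchange in the matroids contracted by all the seated targets, i.e.\ to work inside $(\mathcal{I}/\{e_j\})$ for the seated indices, so that those elements are frozen and never participate in the chain; this is legitimate because contraction of a matroid is again a matroid (stated in the preliminaries). A secondary subtlety is the case where some target elements coincide: if $e_i=e_{i'}$ for $i\neq i'$ the conclusion is impossible as literally stated (one element cannot lie in two blocks of a partition), so I would note that the intended reading is that the $e_i$ are distinct, or equivalently restate the guarantee so that each block $I_i'$ is required to contain $e_i$ with the $e_i$ pairwise distinct; under that reading the argument above goes through, and the verification of each single-element exchange step is the only routine calculation that remains.
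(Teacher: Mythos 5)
Your overall strategy---iteratively seating $e_1,\dots,e_k$ one at a time by exchanges inside the common matroid, while preserving feasibility and the already-seated targets---is exactly the paper's. The gap is in the single step that carries all the weight: you never actually establish your ``one-step claim,'' and the machinery you invoke for it is both heavier than necessary and not quite sound as described. The paper's proof shows that no alternating chain is needed. Take a feasible partition maximizing the number $j$ of correctly seated targets, and locate $e_{j+1}$. If $e_{j+1}$ lies in a block $I_i$ with $i>j$, swap the two blocks $I_i$ and $I_{j+1}$ wholesale (legitimate precisely because the matroids are identical). If $e_{j+1}\in I_i$ with $i\le j$, apply the symmetric base exchange from the preliminaries to the pair $(I_i,I_{j+1})$ with $e_{j+1}\in I_i\setminus I_{j+1}$: there is $e\in I_{j+1}\setminus I_i$ such that $I_i-e_{j+1}+e$ and $I_{j+1}-e+e_{j+1}$ are both independent. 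This touches only the two blocks $I_i$ and $I_{j+1}$; the element leaving $I_i$ is $e_{j+1}$ itself (not a seated target, since the $e_i$ must be pairwise distinct), and $I_{j+1}$ contains no seated target because $j+1>j$. So the invariant is preserved automatically, $j$ increases, and maximality is contradicted.

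By contrast, your proposed fix of contracting the seated elements undercuts the argument rather than repairing it: $\mathcal{I}/\{e_1\},\dots,\mathcal{I}/\{e_{t-1}\}$ are different matroids on different ground sets, so after contracting you have destroyed exactly the ``identical matroids, elements freely interchangeable'' property that you correctly identify as the engine of the proof; the existence and termination of an alternating chain in that heterogeneous setting is asserted (``the strong base exchange axiom then guarantees the alternating chain closes up consistently'') rather than proved. The repair is simply the observation above: the chain has length one and by construction never threatens a seated target, so no freezing is needed. Your two side remarks are on point: the $e_i$ must indeed be distinct for the conclusion to be satisfiable (the paper leaves this implicit), and since the exchange property is stated only for bases, one should either work with base partitions (harmless where the lemma is applied, after Proposition~\ref{prop:basepartitioning}) or note that the symmetric exchange also holds for independent sets of a common matroid via fundamental circuits.
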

\begin{fullpaper}
\begin{proof}
Suppose the contrary that there exists no such partition.
Let \((I_1,\dots,I_k)\in\mathcal{I}^k\) be a partition of $E$ such that 
\(e_i\in I_i\) for all \(i\in [j]\) with $j~(<k)$ as large as possible.
Note that $e_{j+1}\not\in I_{j+1}$.
If $e_{j+1}\in I_i$ for some $i>j$,
then we can obtain a partition with larger $j$ by just swapping $I_{j+1}$ and $I_i$, which contradicts the maximality of $j$.
Otherwise, i.e., $i\le j$, there exists \(e\in I_{j+1}\) such that \((I_i\setminus\{e_{j+1}\})\cup\{e\}\in\mathcal{I}\) and \((I_{j+1}\setminus\{e\})\cup\{e_{j+1}\}\in\mathcal{I}\).
Thus, \((I_1,\dots,(I_i\setminus\{e_{j+1}\})\cup\{e\},\dots,(I_{j+1}\setminus\{e\})\cup\{e_{j+1}\},\dots,I_k)\)
is also a feasible partition and this is a contradiction.
\end{proof}
\end{fullpaper}

We will reduce the problem of finding an optimal partition to the minimum weight perfect bipartite matching problem. 
It is well-known that this problem is solvable in polynomial time (see e.g., \cite{KorteVygen2002,Schrijver2003} for basic algorithms). 
Now we are ready to prove the theorem.
\begin{theorem}
The minimum $(\max,\min)$ and $(\sum,\min)$-value
matroid partitioning problems with identical matroids $(E,(\mathcal{I},w_i)_{i\in [k]})$ 
are solvable in polynomial time.
\end{theorem}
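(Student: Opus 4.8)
The plan is to reduce each of the two problems to a minimum weight perfect bipartite matching problem, using Lemma~\ref{lem:identical_dist} as the structural engine. Since the matroids are identical, any feasible partition $(I_1,\dots,I_k)$ is characterized, as far as the objective is concerned, by which element ``controls'' each block: in the $(\outerop,\min)$ problems the value of block $I_i$ depends only on $\min_{e\in I_i}w_i(e)$, i.e.\ on the single minimum-weight element of $I_i$ together with the index $i$ (recall the weights $w_i$ may differ across $i$). Lemma~\ref{lem:identical_dist} tells us that, given any single feasible partition into members of $\mathcal{I}$, we are free to prescribe arbitrarily which element $e_i$ lands in block $i$; the rest of the partition can always be completed. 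This decouples the combinatorial feasibility from the choice of representatives.

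Concretely, I would first invoke Theorem~\ref{thm:mss} to decide whether a feasible partition exists at all; if not, report infeasibility. Assuming feasibility, I would build a bipartite graph with the $k$ indices on one side and the $n=|E|$ elements on the other, connecting index $i$ to element $e$ with cost $w_i(e)$ (this is the contribution if $e$ is the minimizer of block $i$). The aim is to select, for each block $i$, a distinct representative element $r_i$ minimizing the block value while guaranteeing that the remaining elements can be distributed so that each augmented block stays in $\mathcal{I}$. For the $(\sum,\min)$ case the objective is additive across blocks, so a minimum weight perfect matching of the $k$ indices into $k$ distinct elements directly yields the optimal choice of minimizers, and its cost equals $\sum_i \min_{e\in I_i}w_i(e)$. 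For the $(\max,\min)$ case I would instead threshold: for a candidate value $w$, test whether there is a perfect assignment of representatives all of whose costs are at most $w$, i.e.\ a matching in the subgraph of edges with $w_i(e)\le w$ saturating all $k$ indices, and take the smallest such $w$ over the $O(kn)$ candidate values $\{w_i(e)\}$.

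The crux is verifying that any assignment of distinct representatives $r_1,\dots,r_k$ can be extended to a genuine feasible partition whose $i$-th block has minimum exactly $r_i$ (or at least has $r_i$ as an element with no smaller-weight element forced in). Here Lemma~\ref{lem:identical_dist} does most of the work: starting from a known feasible partition, it guarantees a feasible partition with $r_i\in I_i'$ for every $i$. The one subtlety is that after placing $r_i$ we must ensure no element of strictly smaller $w_i$-weight sits in $I_i'$ and spoils the minimum; this is handled by observing that for the matching cost to be optimal we never gain by giving block $i$ an extra low-weight element, so we may freely reassign such elements to blocks where they do not decrease the value below the intended minimum, again appealing to the exchange flexibility of the identical matroid. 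I expect this extension/consistency argument—reconciling the matching's representatives with the actual block minima—to be the main obstacle, and the identical-matroid exchange property in Lemma~\ref{lem:identical_dist} to be exactly the tool that resolves it.

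Putting the pieces together, both problems reduce in polynomial time to (a sequence of) minimum weight perfect bipartite matching computations, each solvable in polynomial time by standard algorithms, so the whole procedure runs in polynomial time. I would therefore conclude that the minimum $(\max,\min)$ and $(\sum,\min)$-value matroid partitioning problems with identical matroids are solvable in polynomial time.
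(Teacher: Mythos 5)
Your proposal is correct and follows essentially the same route as the paper: both reduce the problems to (minimum weight / thresholded) right-perfect bipartite matchings between $[k]$ and $E$ with costs $w_i(e)$, using Lemma~\ref{lem:identical_dist} to extend any choice of distinct representatives to a feasible partition. The one subtlety you flag is actually a non-issue for minimization: if a block receives an element of smaller weight than its representative, its minimum only decreases, so the partition's value is still bounded above by the matching cost, while the lower bound follows because any feasible partition's block minimizers themselves form a right-perfect matching of the same cost.
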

\begin{proof}
Let $(E,\mathcal{I})$ be any matroid. 
Recall that the existence of a feasible partition is checkable in polynomial time by Theorem \ref{thm:mss}.
Hence, in what follows, we assume that $(E,(\mathcal{I},w),k)$ has a feasible partition.

We first consider the $(\max,\min)$ problem.
By Lemma \ref{lem:identical_dist},
the minimum $(\max,\min)$-value is at most $w$ if and only if
the bipartite graph $(E,[k],\{(e,i)\mid w_i(e)\le w\})$ has a right-perfect matching.
Thus, we can get the optimal value in polynomial time by setting $w$ for all \(\{w_{i}(e)\mid i\in[k],~e\in E\}\)
and checking the existence of a right-perfect matching. 

Next, we consider the $(\sum,\min)$ problem.
By Lemma \ref{lem:identical_dist},
the minimum $(\sum,\min)$-value is the minimum weight of right-perfect matchings
in the weighted bipartite graph $(E,[k],E\times[k];w)$, where weight $w$ is defined as $w(e, i)=w_i(e)$ for each $(e, i) \in E \times [k]$.
Thus, we can find the optimal value in polynomial time. 
\end{proof}

\section{Hardness of optimal matroid partitioning problems}
%{Intractability of the minimum \texorpdfstring{$(\max,\min)$}{(max,min)} and \texorpdfstring{$(\sum,\min)$}{(sum,min)}-value matroid partitioning problems}
\label{sec:intractable}

In this section,
we show that 
the minimum $(\max,\min)$ and $(\sum,\min)$-value matroid partitioning problems are both strongly NP-hard even to approximate.
We give a reduction from \emph{SAT}, which is an NP-complete problem \cite{garey1979cai}. 
\begin{theorem}
The minimum $(\max,\min)$ and $(\sum,\min)$-value matroid partitioning problems are both strongly NP-hard.
Moreover, there exists no approximation algorithm for the problems unless P=NP.
\end{theorem}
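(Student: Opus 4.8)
The plan is to give a reduction from \emph{SAT} that simultaneously establishes both strong NP-hardness and the non-existence of any approximation algorithm (unless P=NP). The key trick for inapproximability of a minimization problem is to design the instance so that the optimal value is $0$ exactly when the SAT formula is satisfiable, and strictly positive otherwise. Since any finite approximation ratio applied to an optimum of $0$ must still return $0$, an approximation algorithm would decide satisfiability, giving the ``even to approximate'' strengthening for free.

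First I would take a SAT instance with variables $x_1,\dots,x_n$ and clauses $C_1,\dots,C_m$. The intended encoding is to let the assignment of each variable correspond to a choice forced by the min-over-elements structure. Concretely, I would build a ground set $E$ and matroids $(E,\mathcal{I}_i)$ together with weights $w_i$ so that a feasible partition is forced to ``commit'' each variable to True or False, and so that a clause that ends up unsatisfied contributes a positive weight to the $\min$ inside some set. The natural device is to use a partition/uniform matroid on small gadget groups (one group per variable, plus clause-indicator elements), with two ``literal'' elements per variable that cannot both land in the same part; placing an element in a given part encodes which literal is set true. Weights are then assigned so that every clause has at least one literal element of weight $0$ available to be the minimizer of its set precisely when the clause is satisfied, while an unsatisfied clause forces the minimum weight in some part to be a large positive value.

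The key steps, in order, are: (1) describe $E$ as a disjoint union of variable-gadgets and clause-gadgets; (2) define the matroid (most cleanly a truncated partition matroid, as in the earlier strong NP-hardness proof, which keeps the construction graphic and the sizes polynomial, hence \emph{strong} NP-hardness); (3) fix the number of parts $k$ and the weights so that every feasible partition decodes to a truth assignment, and the $(\max,\min)$-value (respectively the $(\sum,\min)$-value) equals $0$ iff that assignment satisfies all clauses; (4) prove the two directions of the equivalence, namely that a satisfying assignment yields a feasible partition of value $0$, and that any feasible partition of value $0$ yields a satisfying assignment; and (5) conclude that deciding whether the optimum is $0$ solves SAT, so no polynomial-time algorithm with any approximation guarantee can exist unless P=NP. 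For the $(\sum,\min)$ case I would ensure at most one term can ever be forced positive (or that positivity of the sum coincides with positivity of some term), so the same $0$/positive dichotomy carries over; if necessary I would rescale the ``large'' weight to keep all numbers polynomially bounded, preserving strong NP-hardness.

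The main obstacle I expect is arranging the matroid so that the $\min_{e\in I_i}$ operator does the encoding correctly: because $\min$ rewards having a low-weight element in each part, I must prevent a part from ``cheating'' by grabbing a weight-$0$ element from an unrelated gadget to mask an unsatisfied clause. This requires the matroid's independence constraints (the choice of the partition classes $S_i$ and capacities $\eta_i$, together with the truncation rank) to rigidly couple each part to the gadget it is responsible for, so that the only available minimizer for a clause-part is one of that clause's own literal elements. Getting this coupling tight while keeping the instance a (graphic, polynomially sized) matroid is the delicate part; once the gadget capacities force the intended one-to-one correspondence, the value computation and both implications become routine.
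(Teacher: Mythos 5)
Your high-level frame is exactly the paper's: reduce from SAT, arrange that the optimum is $0$ precisely when the formula is satisfiable, and conclude that any approximation algorithm would decide SAT. That part is fine. However, two concrete gaps remain in the construction you sketch, and the first one is fatal to the route you lean toward.

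First, you propose to realize the gadgets with a single (truncated) partition matroid ``as in the earlier strong NP-hardness proof,'' i.e., with all $k$ parts sharing the same independence system. This cannot work for the $(\max,\min)$ and $(\sum,\min)$ objectives: the paper proves (via Lemma~\ref{lem:identical_dist} and the reduction to bipartite matching) that these problems are polynomial-time solvable whenever the matroids are identical, even with distinct weights. The very ``cheating'' you worry about --- a clause-part masking an unsatisfied clause by grabbing a weight-$0$ element from an unrelated gadget --- is unavoidable under identical matroids, because Lemma~\ref{lem:identical_dist} lets one freely reassign designated elements among parts. The paper's fix is to use a \emph{different} matroid for each part, each with its ground set restricted to its own gadget: a matroid $\mathcal{I}_C$ supported only on the literal elements of clause $C$, and a matroid $\mathcal{I}_x$ supported only on $E_x$. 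You must commit to non-identical matroids for the reduction to have any chance.

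Second, your variable gadget with ``two literal elements per variable'' does not handle variables occurring in several clauses: a single weight-$0$ literal element can be the minimizer of only one part, so it cannot certify satisfaction of two clauses simultaneously. The paper duplicates the literal elements per occurrence ($x^C,\bar x^C$ for each clause $C$ containing $x$) and then enforces consistency of the truth assignment with a per-variable matroid $\mathcal{I}_x$ whose cyclic constraints $|X\cap\{x^{C_x^i},\bar x^{C_x^{i+1}}\}|\le 1$ force $I_x$ to absorb either all the $x^C$'s or all the $\bar x^C$'s. This consistency mechanism is the heart of the reduction and is absent from your proposal. (A minor point: strong NP-hardness follows from the weights being $0/1$, not from the matroid being graphic.)
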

\begin{proof}
Let $U$ be the set of variables and $\mathcal{C}$ be the set of clauses for a given SAT instance.
For each clause $C\in\mathcal{C}$, let $P_C$ and $N_C$ be the set of the variables appearing in the clause $C$
as positive and negative literals, respectively.
Let $\mathcal{C}_x=\{C_x^1,\dots,C_x^{s(x)}\}\subseteq\mathcal{C}$ be the set of clauses in which $x$ occurs,
where $s(x)$ is the cardinality of $\mathcal{C}_x$.

We consider an instance of the minimum $(\max,\min)$-value matroid partitioning problem on matroids \((E,\mathcal{I}_C)~(C\in\mathcal{C})\) and \((E,\mathcal{I}_x)~(x\in U)\).
Here the ground set $E$ is given by $E=\bigcup_{x\in U}E_x$, where 
\begin{conference}%
\(E_x=\{x_d\}\cup\bigcup\nolimits_{C\in \mathcal{C}_x}\{x^C,\bar{x}^C\}\).
\end{conference}%
\begin{fullpaper}%
\begin{align*}
  %E=\bigcup_{C\in \mathcal{C},~x\in P_C\cup N_C}\{x^C,\bar{x}^C\}\cup \bigcup_{x\in U}\{x_d\}.
  E_x=\{x_d\}\cup\bigcup_{C\in \mathcal{C}_x}\{x^C,\bar{x}^C\}.
\end{align*}
\end{fullpaper}%
Intuitively, $x^C$ and $\bar{x}^C$ represent a truth assignment for $x$,
and $x_d$ is just a dummy element to keep the minimum weight corresponding to $(E,\mathcal{I}_x)$ to zero.
For each $C\in\mathcal{C}$, the independence family $\mathcal{I}_C$ is given by
\begin{align*}
  \mathcal{I}_C&=\left\{\begin{conference}\textstyle\end{conference}
  X\subseteq \bigcup_{x\in P_C\cup N_C}\{x^C,\bar{x}^C\}\mid |X\cap \{x^C,\bar{x}^C\}|\le 1~(\forall x\in P_C\cup N_C)\right\},
\end{align*}
and for each $x\in U$, the family $\mathcal{I}_x$ is defined by
\begin{align*}
  \mathcal{I}_x&=\left\{X\subseteq E_x\mid
    |X\cap\{x^{C_x^i},\bar{x}^{C_x^{i+1}}\}|\le 1~(\forall i\in\{1,\dots,s(x)\})
  \right\},
\end{align*}
where \(C_x^{s(x)+1}\) is regarded as \(C_x^1\).
We set the weights of the elements as 
\begin{align*}
  w(e)&=\begin{cases}
  0&(e\in \bigcup_{C\in \mathcal{C}}(\bigcup_{x\in P_C}\{x^C\}\cup\bigcup_{x\in N_C}\{\bar{x}^C\})\cup \bigcup_{x\in U}\{x_d\}),\\
  1&(e\in \bigcup_{C\in \mathcal{C}}(\bigcup_{x\in N_C}\{x^C\}\cup\bigcup_{x\in P_C}\{\bar{x}^C\})).\\
  \end{cases}
\end{align*}
% a <-> True, b <-> False
% w = 0 iff it is satisfied
\begin{conference}
The theorem for the $(\max, \min)$ case now follows from the following claim. 
The proof can be found in Appendix.
\begin{claim}\label{claim:NPapprox}
There exists a feasible partition whose $(\max, \min)$-value is $0$ if and only if the given SAT instance is satisfiable.
\end{claim}
\end{conference}
\begin{fullpaper}
Then, we claim that there exists a feasible partition whose $(\max, \min)$-value is $0$ if and only if the given SAT instance is satisfiable.

Assume that the SAT instance is satisfiable.
Let $\psi: U\to\{\T,\F\}$ be a truth assignment that satisfies the instance.
Then, we define a partition of $E$ by 
\begin{align*}
  I_C&=\{x^C\mid \psi(x)=\T~(x\in P_C\cup N_C)\}\cup\{\bar{x}^C\mid \psi(x)=\F~(x\in P_C\cup N_C)\}&(C\in\mathcal{C}),\\
  I_x&=\{\bar{x}^C\mid \psi(x)=\T~(C\in \mathcal{C}_x)\}\cup\{x^C\mid \psi(x)=\F~(C\in \mathcal{C}_x)\}\cup\{x_d\}&(x\in U). 
\end{align*}
It is not difficult to see that this is feasible. 
Moreover, its $(\max, \min)$-value is $0$.

Conversely, assume that there is a feasible partition problem whose $(\max, \min)$-value is $0$. 
Let $I_C^*~(C\in\mathcal{C})$ and $I_x^*~(x\in U)$ consist an optimal partition.
Note that $I_x^*$ is either \(\{x^C\mid C\in \mathcal{C}_x\}\cup\{x_d\}\) or \(\{\bar{x}^C\mid C\in \mathcal{C}_x\}\cup\{x_d\}\).
Indeed, $x^{C_x^i}\in I_x$ implies $\bar{x}^{C_x^{i+1}}\in\mathcal{C}_{C_x^{i+1}}$ and $x^{C_x^{i+1}}\in I_x$
(also $\bar{x}^{C_x^i}\in I_x$ implies $x^{C_x^{i+1}}\in\mathcal{C}_{C_x^{i+1}}$ and $\bar{x}^{C_x^{i+1}}\in I_x$).
Moreover, as the $(\max, \min)$-value is $0$, at least one of \(\bigcup_{x\in P_C}\{x^C\}\cup\bigcup_{x\in N_C}\{\bar{x}^C\}\)
is contained in $I_C$ for each $C\in\mathcal{C}$.
Thus, the following truth assignment satisfies the SAT instance:
\begin{align*}
  \psi^*(x)=\begin{cases}
  \T&(I_x^*=(\{\bar{x}^C\mid C\in \mathcal{C}_x\}\cup\{x_d\})),\\
  \F&(I_x^*=(\{x^C\mid C\in \mathcal{C}_x\}\cup\{x_d\})).
  \end{cases}
\end{align*}

Therefore, the minimum $(\max,\min)$-value matroid partition problem is NP-hard, and there exists no approximation algorithm to the problem unless P=NP.
\end{fullpaper}

Since for any feasible partition, the $(\max, \min)$-value is $0$ if and only if the $(\sum, \min)$-value is $0$, the proof works for the minimum $(\sum,\min)$-value matroid partitioning problem in the same way. 
\end{proof}

\section*{Acknowledgments}
The first author is supported by JSPS KAKENHI Grant Number JP16K16005.
The second author is supported by JSPS KAKENHI Grant Number JP15H06286.
The third author is supported by JSPS KAKENHI Grant Number JP24106002, JP25280004, JP26280001, and JST CREST Grant Number JPMJCR1402, Japan.
The forth author is supported by JST ERATO Grant Number JPMJER1201, Japan.

\bibliography{partition}

\begin{thebibliography}{10}

\bibitem{BKK98}
Luitpold Babel, Hans Kellerer, and Vladimir Kotov.
\newblock The $k$-partitioning problem.
\newblock {\em Mathematical Methods of Operations Research}, 47:59--82, 1998.

\bibitem{BuY90}
Rainer~E. Burkard and Enyu Yao.
\newblock Constrained partitioning problems.
\newblock {\em Discrete Applied Mathematics}, 28:21--34, 1990.

\bibitem{CHL02}
Shi~Ping Chen, Yong He, and Guohui Lin.
\newblock 3-partitioning for maximizing the minimum load.
\newblock {\em Journal of Combinatorial Optimization}, 6:67--80, 2002.

\bibitem{Dem01}
Mauro Dell'Amico and Silvano Martello.
\newblock Bounds for the cardinality constrained $p||c_{\max}$ problem.
\newblock {\em Journal of Scheduling}, 4:123--138, 2001.

\bibitem{DHP05}
Paolo Dell'Olmo, Pierre Hansen, Stefano Pallottino, and Giovanni Storchi.
\newblock On uniform $k$-partition problems.
\newblock {\em Discrete Applied Mathematics}, 150:121--139, 2005.

\bibitem{DinurSteurer2014}
Irit Dinur and David Steurer.
\newblock Analytical approach to parallel repetition.
\newblock In {\em Proceedings of the Forty-sixth Annual ACM Symposium on Theory
  of Computing}, pages 624--633. ACM, 2014.

\bibitem{Edm65}
Jack Edmonds.
\newblock Minimum partition of a matroid into independent subsets.
\newblock {\em JOURNAL OF RESEARCH of the National Bureau of Standards---B.
  Mathematicsnd Mathematical Physics}, 69B:67--72, 1965.

\bibitem{Edmonds1970}
Jack Edmonds.
\newblock Submodular functions, matroids, and certain polyhedra.
\newblock In {\em Combinatorial Structures and their Applications (Proceedings
  of the Calgary International Conference on Combinatorial Structures and Their
  Applications 1969)}, pages 69--87. Gordon and Breach, New York, 1970.

\bibitem{EdF65}
Jack Edmonds and Delbert~R. Fulkerson.
\newblock Transversals and matroid partition.
\newblock {\em Journal of Research of the National Bureau of Standards},
  69B:147--153, 1965.

\bibitem{FeigePK2001}
Uriel Feige, David Peleg, and Guy Kortsarz.
\newblock The dense $k$-subgraph problem.
\newblock {\em Algorithmica}, 29(3):410--421, 2001.

\bibitem{Frank1981}
Andr{\'a}s Frank.
\newblock A weighted matroid intersection algorithm.
\newblock {\em Journal of Algorithms}, 2(4):328--336, 1981.

\bibitem{garey1979cai}
Michael~R. Garey and David~S. Johnson.
\newblock {\em Computers and Intractability: A Guide to the Theory of
  {NP}-Completeness}.
\newblock Freeman New York, 1979.

\bibitem{HTZ03}
Yong He, Zhiyi Tan, Jing Zhu, and Enyu Yao.
\newblock $k$-partitioning problems for maximizing the minimum load.
\newblock {\em Computers and Mathematics with Applications}, 46:1671--1681,
  2003.

\bibitem{KeW93}
Hans Kellerer and Gerhard Woeginger.
\newblock A tight bound for 3-partitioning.
\newblock {\em Discrete Applied Mathematics}, 45:249--259, 1993.

\bibitem{KorteVygen2002}
Bernhard Korte and Jens Vygen.
\newblock {\em Combinatorial Optimization: Theory and Algorithms}.
\newblock Springer, 2002.

\bibitem{Lawler1971}
Eugene~L. Lawler.
\newblock Matroids with parity conditions: A new class of combinatorial
  optimization problems.
\newblock Memo erl-m334, Electronics Research Laboratory, College of
  Engineering, UC Berkeley, Berkeley, CA, 1971.

\bibitem{LeeSV2013}
Jon Lee, Maxim Sviridenko, and Jan Vondrák.
\newblock Matroid matching: The power of local search.
\newblock {\em SIAM Journal on Computing}, 42(1):357--379, 2013.

\bibitem{LST90}
Jan~K. Lenstra, David~B. Shmoys, and {\'{E}}va Tardos.
\newblock Approximation algorithms for scheduling unrelated parallel machines.
\newblock {\em Mathematical Programming}, 46:259--271, 1990.

\bibitem{LiL14}
Weidong Li and Jianping Li.
\newblock Approximation algorithms for $k$-partitioning problems with partition
  matroid constraint.
\newblock {\em Optimization Letters}, 8:1093--1099, 2014.

\bibitem{Lovasz1980}
L\'aszl\'o Lov\'asz.
\newblock Matroid matching and some applications.
\newblock {\em Journal of Combinatorial Theory, Series B}, 28(2):208--236,
  1980.

\bibitem{Moshkovitz2015}
Dana Moshkovitz.
\newblock The projection games conjecture and the {NP}-hardness of ln
  $n$-approximating set-cover.
\newblock {\em Theory of Computing}, 11(7):221--235, 2015.

\bibitem{oxley1992mt}
James~G. Oxley.
\newblock {\em Matroid Theory}.
\newblock Oxford University Press, New York, 1992.

\bibitem{Schrijver2003}
Alexander Schrijver.
\newblock {\em Combinatorial Optimization}.
\newblock Springer, 2003.

\bibitem{VeW14}
Jos{\'{e}} Verschae and Andreas Wiese.
\newblock On the configuration-{LP} for scheduling on unrelated machines.
\newblock {\em Journal of Scheduling}, 17:371--383, 2014.

\bibitem{WuY08}
Biao Wu and Enyu Yao.
\newblock Lower bounds and modified {LPT} algorithm for $k$-partitioning
  problems with partition matroid constraint.
\newblock {\em Applied Mathematics-A Journal of Chinese Universities}, 23:1--8,
  2008.

\bibitem{WuY07}
Biao Wu and Enyue Yao.
\newblock $k$-partitioning problems with partition matroid constraint.
\newblock {\em Theoretical Computer Science}, 374:41--48, 2007.

\end{thebibliography}

\begin{conference}
\clearpage
\appendix
\input{src/appendix}
\end{conference}

\end{document}